\newcommand{\mypar}[1]{\smallskip\noindent{\bfseries\boldmath#1}}
\newcommand{\goal}{\ensuremath{t}\xspace}
\newcommand{\peak}[0]{\wedge}
\newcommand{\strategypath}[0]{\ensuremath{P}}
\newcommand{\projectedpath}[0]{\ensuremath{P^*}}
\newcommand{\vis}[1]{\ensuremath{\mathit{Vis(#1)}}}
\newcommand{\connectedcomponent}[0]{\ensuremath{\mathcal{C}}}
\newcommand{\turningpoint}{\ensuremath{p_{\rotatebox[origin=c]{180}{$\hookrightarrow$}}}}
\newcommand{\pointopt}{\ensuremath{p}}
\newcommand{\subpath}{subpath\xspace}
\newcommand{\subpaths}{subpaths\xspace}
\newcommand{\turnsegment}{\ensuremath{h_\ell^{\rotatebox[origin=c]{180}{$\hookrightarrow$}}}}
\begin{document}

\title{Competitive Searching over Terrains
\thanks{This research was initiated at the 6th Workshop on Applied Geometric Algorithms (AGA 2023), Otterlo, The Netherlands, April 7--21, 2023.}
} 


\author{Sarita de Berg\inst{1} \and Nathan van Beusekom\inst{2} \and Max van Mulken\inst{2} \and Kevin Verbeek\inst{2} \and Jules Wulms\inst{2}}

\authorrunning{S. de Berg, N. van Beusekom, M. van Mulken, K. Verbeek, and J. Wulms} 

\institute{Utrecht University, The Netherlands\\ \email{S.deBerg@uu.nl} 
\and TU Eindhoven, The Netherlands\\ \email{[n.a.c.v.beusekom|m.j.m.v.mulken|k.a.b.verbeek|j.j.h.m.wulms]@tue.nl}}

\maketitle

\begin{abstract}
We study a variant of the searching problem where the environment consists of a known terrain and the goal is to obtain visibility of an unknown target point on the surface of the terrain. The searcher starts on the surface of the terrain and is allowed to fly above the terrain. The goal is to devise a searching strategy that minimizes the \emph{competitive ratio}, that is, the worst-case ratio between the distance traveled by the searching strategy and the minimum travel distance needed to detect the target. For $1.5$D terrains we show that any searching strategy has a competitive ratio of at least $\sqrt{82}$ and we present a nearly-optimal searching strategy that achieves a competitive ratio of $3\sqrt{19/2} \approx \sqrt{82} + 0.19$. This strategy extends directly to the case where the searcher has no knowledge of the terrain beforehand. For $2.5$D terrains we show that the optimal competitive ratio depends on the maximum slope $\lambda$ of the terrain, and is hence unbounded in general. Specifically, we provide a lower bound on the competitive ratio of $\Omega(\sqrt{\lambda})$. Finally, we complement the lower bound with a searching strategy based on the maximum slope of the known terrain, which achieves a competitive ratio of $O(\sqrt{\lambda})$. 
\end{abstract}

\section{Introduction}
The development of autonomous mobile systems has garnered a lot of attention recently. With self-driving cars and autonomous path-finding robots becoming more commonplace, the demand for efficient algorithms to govern the decision-making of these systems has risen as well. A class of problems that naturally arises from these developments is the class of \emph{searching problems}, also known as \emph{searching games}: given an environment, move through the environment to find a target at an unknown location. Many variants of this general problem have been studied in literature, typically differing in the type of search environment, the way the searcher can move through the environment, and the way the target can be detected. In this paper we consider a variant of the problem that is motivated by searching terrains using a flying (autonomous) drone with mounted cameras/sensors, as for example in search-and-rescue operations. Specifically, our environment is defined by a height function $T_d: \mathbb{R}^d \rightarrow \mathbb{R}$. For $d=1$ we refer to the terrain as a $1.5$D terrain, and for $d=2$ we refer to the terrain as a $2.5$D terrain. We omit the dimension $d$ from the terrain function $T_d$ when it is clear from the context. The terrain is known to the searcher, and the searcher can fly anywhere above the terrain. The target $t$ is discovered if it can be seen by the searcher along a straight line.
The goal is to devise a \emph{searching strategy} (that is, a search path) that finds the (unknown) target $t$ as quickly as possible (see Figure~\ref{fig:1D-strategy}). To the best of our knowledge, this natural variant of the searching problem has not been studied before.

As is common for searching problems, we analyze the quality of the searching strategy using \emph{competitive analysis}. For that we consider the ratio between the travel distance using our searching strategy and the minimum travel distance needed to detect that target. The maximum value of this ratio over all possible environments and all possible target locations is the \emph{competitive ratio}~$c$ of the searching strategy. The goal is to find a searching strategy that minimizes~$c$.

\begin{figure}[t]
    \centering
    \includegraphics[page=2]{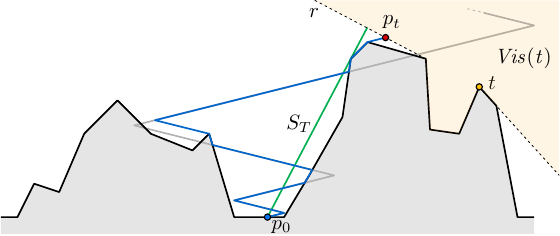}
    \caption{Our searching strategy starts from $p_0$ and then follows the blue searching path. When the searcher reaches $p_\goal$, it can see the target $t$.}
    \label{fig:1D-strategy}
\end{figure}

\mypar{Related work.}
Searching problems or searching games have been studied extensively in the past decades. Here, we mostly restrict ourselves to searching problems with a geometric environment and continuous motion. One of the most fundamental searching problems is the problem of searching on an infinite line, where the target is detected only when the searcher passes over it on the line. An optimal strategy for this problem was discovered by Beck and Newman~\cite{beck1970yet} and works as follows. Assuming that the distance to the target is at least one, we first move one to the right from the starting point. Next, we move back to the starting point and then move two to the left. We then repeat this process, alternating between moving to the right and left of the starting point, every time doubling the distance from the starting point. This searching strategy has a competitive ratio of $9$, which is optimal for this problem.

In subsequent work, researchers have studied searching problems for many other different environments, including lines and grids~\cite{DBLP:journals/iandc/Baeza-YatesCR93}, line arrangements~\cite{DBLP:conf/isaac/BoutsCGKM18}, and graphs~\cite{DBLP:journals/ijcga/BoseCDT17, DBLP:journals/tcs/BoseM04}. Other variants include searching on a line when an upper bound on the distance to $\goal$ is known~\cite{DBLP:journals/tcs/BoseCD15, DBLP:journals/dam/HipkeIKL99}, when turns contribute to the cost of the solution~\cite{demaine2006online}, or when there are multiple searchers~\cite{DBLP:journals/comgeo/Baeza-YatesS95}. 

In settings where the environment is $2$-dimensional (or higher), it is not possible to visit every point in the environment, and hence we need to consider different ways of detecting the target. In these settings, the target is often considered detected if it can be seen directly from the searcher's position along a straight, unobstructed line. One example is the problem of finding a target point $\goal$ inside a simple polygon $P$ with $n$ vertices~\cite{klein1997algorithmische}. For this problem the optimal competitive ratio is unbounded, as it necessarily depends on $n$~\cite{Kleinberg94,schuierer1999line}. As a result, researchers have explored this searching problem for special sub-classes of polygons where a constant competitive ratio can be achieved. A polygon $P$ is considered a \emph{street} if there exist two vertices $s$ and $t$ on its border such that the two boundary chains leading from $s$ to $t$ are mutually weakly visible. Searching for an unknown point in a street can be done with a competitive ratio of $\sqrt{2}$, which is optimal~\cite{icking2004optimal,klein1992walking}. There is a large body of further work on searching problems in variants of streets~\cite{datta1995competitive, wei2019walking,lopez1996generalized}, star-shaped polygons~\cite{icking1995searching, lopez2003searching}, or among obstacles~\cite{blum1997, kalyanasundaram1993competitive}. For a comprehensive overview of these variants, see~\cite{ghosh2010online}. Specifically, L\'opez-Ortiz and Schuierer~\cite{lopez2003searching} obtain a competitive ratio of~11.51 for star-shaped polygons; note that $1.5$D terrains are a special type of unbounded star-shaped polygons. 

Other problems strongly related to searching problems are the \emph{exploration problems}, for which the goal is to move through the interior of an unknown environment to gain visibility of its entire interior. Here, the competitive ratio relates the length of the searching strategy to the shortest watchman tour. An unknown simple polygon can be fully explored with competitive ratio $26.5$~\cite{hoffmann2001polygon}. For polygons with holes, the competitive ratio is dependent on the number of holes~\cite{deng1998learn, georges2013online}, whereas in a rectilinear polygon without holes a competitive ratio as low as $3/2$ can be achieved~\cite{hammar2006competitive}.  Complementary to this problem is the exploration of the \emph{outer} boundary of a simple polygon, where a 23.78 or 26.5 competitive ratio can be achieved for a convex or concave polygon, respectively~\cite{Wei21polygonboundary}.

\mypar{Contributions.}
Given a starting position $p_0$ on the surface of the terrain (we assume without loss of generality that $p_0$ is at the origin and that $T_d(p_0) = 0$), the goal is to devise an efficient searching strategy to find an unknown target point $\goal$ on the surface of the terrain, where the searcher can detect $\goal$ if it is visible from the searcher's position 
along a straight unobstructed line (see Figure~\ref{fig:1D-strategy}). In our problem the searcher is not restricted to the surface of the terrain, but it is allowed to move to any position on or above the terrain. 

In Section~\ref{sec:1.5D} we consider the problem for $1.5$D terrains. We first prove that any searching strategy for this problem must have a competitive ratio of at least $\sqrt{82}$. We then present a searching strategy with a competitive ratio of $3\sqrt{19/2} \approx \sqrt{82} + 0.19$. Our searching strategy is a combination of the classic searching strategy on an infinite line with additional vertical movement. 

In Section~\ref{sec:2.5D} we consider the problem for $2.5$D terrains. Here we show that no searching strategy can achieve a bounded competitive ratio, as the competitive ratio necessarily depends on the maximum slope, or Lipschitz constant, $\lambda$ of the terrain function. Specifically, we show that the competitive ratio of any searching strategy for this problem is at least $\Omega(\sqrt{\lambda})$. We then present a novel searching strategy that achieves a competitive ratio of $O(\sqrt{\lambda})$, which is thus asymptotically optimal in terms of $\lambda$. 

In our searching problems we assume that the terrain~$T_d$ is known to the searcher. In Section~\ref{sec:conclusion} we conclude that our strategy for a $1.5$D terrain is directly applicable if this is not the case, and discuss to what degree the results for the $2.5$D case extend as well.

\section{Competitive searching on 1.5D terrains}\label{sec:1.5D}
In this section we consider our searching problem on a 1.5D terrain, defined by the height function~$T_1$. For 1.5D terrains, the visibility region~$\vis{t}$ of a target $t$ can be defined as the set of all points $q$ for which the line segment $tq$ does not properly intersect the terrain. That is, the region that contains all points that can see $t$. Since we can assume that $p_0$ does not see $t$, there is a half-line originating from $t$ that needs to be crossed to enter $\vis{t}$ (see Figure~\ref{fig:1D-strategy}). We call this half-line a visibility ray. Thus, the goal of any searching strategy on 1.5D terrain is to enter~$\vis{\goal}$ by crossing a visibility ray $r$ originating at the target point~$\goal$. We first establish a lower bound on the competitive ratio of any searching strategy.

\begin{figure}[b]
    \centering
    \includegraphics{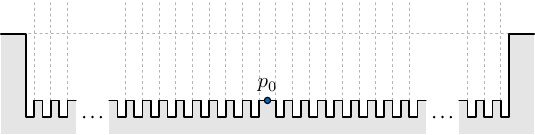}
    \caption{The lower bound construction for the 1.5D case.}
    \label{fig:1.5D-LB}
\end{figure}

\begin{theorem}\label{thm:1D-lower-bound}
    The competitive ratio for searching on 1.5D terrains is at least~$\sqrt{82}$. 
\end{theorem}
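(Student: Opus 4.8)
The plan is to build a single adversarial terrain together with a family of candidate targets, and to argue that any searching strategy must, in the worst case, behave like the classical search on an infinite line while additionally paying for a vertical ascent that the offline optimum absorbs into a diagonal shortcut. Concretely, I would flatten the terrain to height~$0$ except for tall, very steep peaks placed at a geometric sequence of horizontal positions $\pm x_0, \pm 2x_0, \pm 4x_0, \dots$ on both sides of $p_0$. Behind each peak I place a candidate target so that its visibility ray points back over the peak toward the origin at a controlled slope; the effect is that to cross the ray the searcher must reach the horizontal position of that target and, at that position, have ascended to a height $h$ proportional to the horizontal distance $x$, say $h = t\,x$ for a parameter $t$ fixed later.

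For a candidate at horizontal distance $x$ the offline optimum knows the target and may fly straight to the nearest point of its visibility ray; since the terrain is otherwise flat this costs at most $\opt \le \sqrt{x^2 + h^2} = x\sqrt{1+t^2}$. The online searcher, by contrast, does not know which side the target lies on nor how far, so it must repeatedly explore to increasing horizontal distances, alternating between the two sides, before it can cross the true ray. I would invoke the standard lower bound for searching on a line (Beck--Newman): choosing the peaks at a geometric spacing and letting the adversary reveal the target to be just beyond the searcher's deepest excursion, the total horizontal travel before the true ray is crossed is at least $9x$ in the limit. Crossing that final ray then forces an additional vertical ascent of $h$, so the traversed length is at least $9x + h = x(9+t)$.

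Combining the two bounds, the competitive ratio is at least
\[
\frac{x(9+t)}{x\sqrt{1+t^2}} \;=\; \frac{9+t}{\sqrt{1+t^2}},
\]
and the adversary is free to choose the slope parameter~$t$. A one-variable optimization shows this expression is maximized at $t = 1/9$, where it equals $82/\sqrt{82} = \sqrt{82}$, giving the claimed bound.

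The step I expect to be the main obstacle is making the online lower bound of $9x + h$ fully rigorous, that is, proving that the searcher cannot amortize the vertical ascent into its horizontal exploration. A clever strategy might try to gain height gradually while searching so as to shorten the final climb, so I must design the peaks and visibility rays so that any height gained away from the true target's horizontal position is useless for crossing its ray, forcing the vertical cost $h$ to be paid essentially on top of the horizontal search cost rather than in parallel with it. The remaining care lies in the familiar subtleties of the line-search lower bound: using a geometric (ratio-$2$) spacing of the peaks, taking the ``target just beyond the last excursion'' adversary, and passing to the limit so that the horizontal factor converges exactly to $9$ rather than a smaller constant.
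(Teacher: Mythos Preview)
Your plan has a genuine gap at exactly the point you yourself flag: the additive lower bound $9x+h$ is not achievable. A searcher \emph{can} amortize the vertical ascent into the horizontal exploration simply by flying diagonally; the only thing you can conclude from ``total horizontal travel $\geq 9x$'' together with ``final height $\geq h$'' is a Pythagorean bound
\[
\text{path length} \;\geq\; \sqrt{(9x)^2 + h^2}.
\]
Plugging this into your setup with $h=tx$ and $\opt \le x\sqrt{1+t^2}$ yields
\[
\frac{\sqrt{81+t^2}}{\sqrt{1+t^2}},
\]
which is \emph{decreasing} in $t$ and hence maximized at $t=0$, giving only the line-search bound~$9$. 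No design of the visibility-ray slopes fixes this: as soon as the offline optimum is allowed to absorb the height into a diagonal, the online searcher can do the same during its exploration, and the extra factor disappears.

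The paper avoids this by using \emph{two orthogonal} families of targets rather than one slanted family, and letting the adversary choose between them after seeing the strategy. Concretely, it places pits with (near-)vertical visibility rays at integer abscissae (so $\opt$ equals the horizontal distance~$d$), together with a far-away mountain casting a \emph{horizontal} visibility ray at height~$h$ (so $\opt$ equals~$h$). Given any strategy, the line-search lower bound supplies a pit at distance~$d$ requiring horizontal travel~$9d$; the paper then sets $h=d$. If the strategy has not reached height~$h$ by the time its horizontal travel hits~$9d$, the adversary reveals the horizontal-ray target: eventually reaching height~$h$ costs at least $\sqrt{(9d)^2+h^2}=\sqrt{82}\,d$ against $\opt=h=d$. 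Otherwise the strategy \emph{has} reached height~$h=d$ after $9d$ horizontal travel, so already $\sqrt{(9d)^2+d^2}=\sqrt{82}\,d$ has been spent before the pit at distance~$d$ (with $\opt=d$) is seen. The crucial idea you are missing is this decoupling: one target type makes $\opt$ purely horizontal, the other makes it purely vertical, so the searcher cannot share cost between the two via a diagonal shortcut.
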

\begin{proof}
    Consider a terrain with pits at integer x-coordinates that have infinite slope, and thus cast (near-)vertical visibility rays. Furthermore, there is a rectangular mountain infinitely far away casting a horizontal visibility ray at some height $h$ (see Figure~\ref{fig:1.5D-LB}). Because of the lower bound of $9$ on the competitive ratio for searching on a line~\cite{beck1970yet}, there must be a pit at distance $d$ such that the distance covered by the optimal strategy for searching on a line is $9d$. We set~$h = d$.

    Now consider a search path $P$ for this terrain. If $P$ does not reach height $h$ after covering a horizontal distance of $9 d$, then $P$ travels a distance of more than $\sqrt{81 d^2 + h^2} = \sqrt{82} d$ before reaching the horizontal ray at height $h$, resulting in a competitive ratio of more than $\sqrt{82}$. Otherwise, $P$ travels a distance of at least $\sqrt{81 d^2 + h^2} = \sqrt{82} d$ before reaching the pit at distance $d$. Hence the competitive ratio of $P$ is at least $\sqrt{82}$.
\end{proof}

\mypar{Searching strategy.}
Our strategy is based on the classic searching strategy on an infinite line with additional vertical movement. Specifically, we construct a \emph{projected path}~$\projectedpath$ that acts as a guide for the actual \emph{searching path}~$\strategypath$. In the description of our strategy we make use of infinitesimally small steps at the start, as this simplifies the description and analysis. We later mention how to avoid this and make our strategy feasible in practice, under the assumption that the length of the shortest path to the target is bounded from below by some positive constant. Starting from~$p_0$, $\projectedpath$ moves diagonally with slope~$s$ to one side, for a horizontal distance of $\varepsilon$, and then moves back to $x=0$, again with slope~$s$. Subsequently, $\projectedpath$ alternates between moving left and right of~$p_0$, doubling the horizontal distance when moving away from~$p_0$, and always using slope~$s$. 
As a result, $\projectedpath$~consists of $xy$-monotone \emph{segments}, and \emph{turning points} where the $x$-direction swaps.
Specifically, $\projectedpath$ is defined by the following functions for $i\in\mathbb{Z}$:
\begin{align*}
h^i_r(x) = s\cdot(2^i + x) &\phantom{=}\text{for } i \text{ odd and } -2^{i-2}\leq x \leq 2^{i-1}, \\
h^i_\ell(x) = s\cdot(2^i - x) &\phantom{=}\text{for }i \text{ even and } -2^{i-1}\leq x \leq 2^{i-2} .
\end{align*}

We call the line segments $h^i_r$ the \emph{right} segments of~$\projectedpath$, and the line segments $h^i_\ell$ the \emph{left} segments of~$\projectedpath$. Observe that for two consecutive segments, the values at the ends of the domains coincide, which results in~$\projectedpath$ being a connected path. Specifically, we get $h^i_r(2^{i-1}) = h^{i+1}_\ell(2^{(i+1)-2})$ and $h^{i-1}_\ell(-2^{(i-1)-1}) = h^i_r(-2^{i-2})$. 

The actual search path~$\strategypath$ follows $\projectedpath$ (see Figure~\ref{fig:notation-1D}). Whenever $\strategypath$ hits the terrain, it follows the terrain upwards until it can continue moving diagonally with slope~$s$ again. This diagonal part of~$\strategypath$ does not coincide with~$\projectedpath$, so once $\strategypath$ hits~$\projectedpath$, $\strategypath$ starts following~$\projectedpath$ again. Observe that $\strategypath$ still consists of $xy$-monotone polygonal chains and turning points, albeit both can differ from $\projectedpath$. We refer to the monotone chains of~$P$ as right and left \subpaths of~$P$, when they are monotone in the positive and negative $x$-direction, respectively. We also refer to a line segment in such a right or left \subpath as a left or right segment.
We will choose $s$ later to optimize the resulting competitive ratio.

\mypar{Preliminaries and definitions.}
The target $\goal$ can be seen from any point in the visibility region $\vis{\goal}$.
We consider all possible visibility rays~$r(s_r, d_r)$ that can separate $\vis{\goal}$ and~$p_0$, where $s_r$ is the slope of~$r$ in the positive $x$-direction, and $d_r$ is the distance between $r$ and~$p_0$.

\begin{figure}[t]
    \centering
    \includegraphics[page=1]{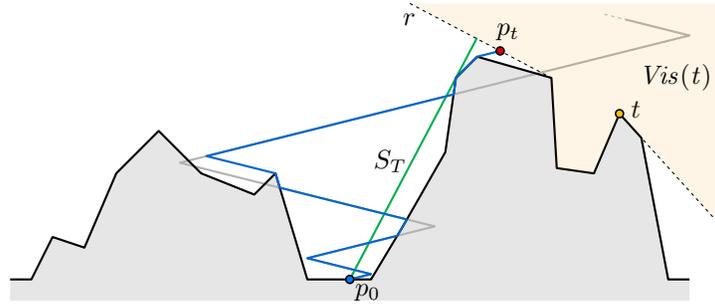}
    \caption{The strategy $\strategypath$ (blue) and the projected path $\projectedpath$ (gray). The goal is seen at~$p_\goal$.}
    \label{fig:notation-1D}
\end{figure}

Let $S$ be the line segment between~$p_0$ and $r$ that is perpendicular to~$r$, so $|S| = d_r$. Furthermore, let $S_T$ be the shortest geodesic path from $p_0$ to $r$, taking the terrain $T$ into account. If $S$ does not properly intersect~$T$, then $|S_T|=d_r$. Note that the last line segment of $S_T$ is perpendicular to~$r$.
Finally, let $p_\goal \in P$ be the point where $\strategypath$ crosses~$r$ to enter~$\vis{t}$. 

We define $\tau(r)$ as the distance traversed over $P$ until $r$ is crossed, i.e. from~$p_0$ until $p_\goal$, and $c(r) = \frac{\tau(r)}{|S_T|}$ as the competitive ratio to cross a ray~$r$. To simplify our proofs, we additionally introduce the following definitions.
Let $p_\goal = (x,y)$, then we define $\tau^*(r) = y \sqrt{1+s^2}/s$. So, $\tau^*(r)$ is the length of a path with slope~$s$ up to height~$y$.
When $\strategypath$ deviates from~$\projectedpath$, the projected path is intersected by~$T$ and hence $\strategypath$ is steeper than $\projectedpath$. It follows that $\tau(r)\leq\tau^*(r)$. We use the ratio $c^*(r) =\frac{\tau^*(r)}{|S_T|}$ in our proofs, and analyze the maximum of $c^*(r)$ over all instances to bound the competitive ratio of our searching strategy from above.

For computing the competitive ratio, we only need to consider visibility rays originating from one side. This is due to the symmetric nature of our strategy: we can take any instance with a visibility ray originating left of $p_0$, and transform it into a case equivalent to having the visibility ray originating from the right of $p_0$. We achieve this by scaling all distances in $\projectedpath$ by 2 to get the horizontally symmetrical path. To see this observe that for $-2^{i-2}\leq x \leq 2^{i-1}$ and $x' = -x$
\begin{equation*}
    h^i_r(x) = s\cdot(2^i + x) = 2\cdot s\cdot (2^{i-1} + x/2) = 2 \cdot h^{i-1}_\ell(x'/2).
\end{equation*}
Additionally, we mirror $T$ horizontally in $p_0$, hence $r$ is also mirrored with respect to~$p_0$. We thus consider only visibility rays that originate to the right of~$p_0$.

Finally, since $T$ is a height function, the visibility region of any point above the terrain includes the vertical ray cast upwards from that point. Thus, for the visibility ray~$r$ that separates $\vis{\goal}$ from $p_0$ it holds that $s_r \leq 0$.

\mypar{Proof structure.}
To determine the competitive ratio of our searching strategy, we analyze the competitive ratio in a worst-case instance $(T, r)$ for $c^*(r)$, where $T$ is the terrain and $r = r(s_r, d_r)$ is the visibility ray from the target. To that end we first establish several properties that must hold in some worst-case instance. To exclude various instances from consideration, we can use the following lower bound on the competitive ratio of our strategy. A competitive ratio below this bound would contradict the lower bound for searching on a line~\cite{beck1970yet}.
\begin{restatable}{lemma}{lowerboundS}\label{lem:lowerboundS}
        The competitive ratio $c^*(r)$ is at least $9 \sqrt{1+s^2}$.
\end{restatable}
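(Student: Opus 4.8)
The plan is to reduce the problem to classical searching on a line and invoke the Beck--Newman lower bound of $9$. The first step is a clean geometric identity for $\tau^*(r)$. Since consecutive segments of $\projectedpath$ double their horizontal amplitude, every segment climbs from one turning point to the next, so the height along $\projectedpath$ is monotonically increasing and each infinitesimal step satisfies $|dy| = s\,|dx|$. Consequently, if $\projectedpath$ reaches height $y$ after traversing a total horizontal distance $D^*$, then $y = s\,D^*$ and the arc length travelled equals $\sqrt{1+s^2}\,D^*$. In particular $\tau^*(r) = y\sqrt{1+s^2}/s = \sqrt{1+s^2}\,D^*$; that is, $\tau^*(r)$ is exactly $\sqrt{1+s^2}$ times the horizontal distance that the horizontal projection of our strategy travels before the crossing, and this projection is precisely the doubling line-search strategy.

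Next I would exhibit, for an arbitrary line target at horizontal distance $D$, a terrain that turns our strategy into this line search. Take the terrain flat at height $0$ with a single, arbitrarily steep (near-vertical) pit at $x = D$, with $\goal$ at its bottom. Two things then hold in the limit of a vertical pit. First, the bounding visibility ray $r$ becomes vertical at $x = D$, so $d_r \to D$ and, since the flat terrain does not obstruct the horizontal segment from $p_0$ to $(D,0)$, the geodesic distance satisfies $|S_T| \to D$. Second, because $\projectedpath$ leaves $p_0$ climbing and hence stays strictly above the flat terrain, we have $\strategypath = \projectedpath$ everywhere until the crossing, and $\strategypath$ first enters $\vis{\goal}$ exactly when its horizontal projection first reaches $x = D$ --- the same event at which the line strategy first passes the target. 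Writing $\Lambda(D)$ for the horizontal distance the line strategy travels before first reaching $D$, the identity above gives $c^*(r) = \tau^*(r)/|S_T| \to \sqrt{1+s^2}\,\Lambda(D)/D$.

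Finally I would take the worst case over $D$. If the competitive ratio of our strategy (the supremum of $c^*(r)$ over instances) were strictly below $9\sqrt{1+s^2}$, then $c^*(r) < 9\sqrt{1+s^2}$ for every instance of the above form, which by the displayed relation forces $\Lambda(D)/D < 9$ for every target distance $D$. But the horizontal projection of $\projectedpath$ is a valid strategy for searching on a line, and no such strategy can beat the Beck--Newman lower bound of $9$; this is the contradiction. Hence the competitive ratio is at least $9\sqrt{1+s^2}$.

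I expect the main obstacle to be the two limiting claims in the reduction: verifying that $|S_T| \to D$ and, above all, that the first crossing of the (near-)vertical ray coincides exactly with the projected strategy first reaching $x = D$, so that the line lower bound transfers with precisely the factor $\sqrt{1+s^2}$ and no loss. The monotone-height identity for $\tau^*(r)$ is the other load-bearing step, but it follows directly from the doubling amplitudes and needs only to be stated carefully.
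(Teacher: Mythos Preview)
Your proposal is correct and takes essentially the same approach as the paper: project the strategy horizontally to obtain a line-search strategy, use the identity $\tau^*(r)=\sqrt{1+s^2}\cdot(\text{horizontal distance})$, and invoke the Beck--Newman lower bound of $9$ to derive a contradiction. The paper's proof is simply terser---it asserts the projection argument abstractly without building the flat-terrain-with-pit instances you construct---but the logical skeleton is identical.
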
 

\begin{proof}
        Suppose the competitive ratio of $c^*$ would be strictly smaller than $9\sqrt{1+s^2}$. Consider the 1-dimensional searching problem of searching for a target point on a line. For this problem, we know the competitive ratio is lower bounded by 9~\cite{beck1970yet}. If we project the path of our strategy onto this line, we obtain a strategy to search for a point on the line. If our strategy has $\tau^*(r) \leq d$, the horizontal distance traveled is at most $d\cos(\arctan(s)) = d/\sqrt{1+s^2}$. Thus, the competitive ratio for searching on a line would be strictly smaller than $9\sqrt{1+s^2}/\sqrt{1+s^2} = 9$, which is a contradiction.
\end{proof}

In the remainder of this section we show that a worst-case instance $(T, r(s_r, d_r))$ has the following properties:
\begin{itemize}
    \item The ray $r$ is arbitrarily close to a turning point $\turningpoint$ of $P$ when $p_t$ lies on a right \subpath (Lemma~\ref{lem:1D-move-r}).
    \item The ray $r$ satisfies $s_r \leq -s$ and $p_t$ lies on a right \subpath (Lemma~\ref{lem:flat-rays}).
    \item If $\turningpoint$ is not a turning point of $\projectedpath$, and thus a local maximum $\peak$ of $T$ intersects $\projectedpath$ before $\turningpoint$, then $\peak$ is at $\turningpoint$ and $r$ is vertical (Lemma~\ref{lem:1D-peak-and-vertical}).
    \item If $\turningpoint$ coincides with a turning point of $\projectedpath$, then $r$ is vertical (Lemma~\ref{lem:1D-unobstructed}).
\end{itemize}

After establishing these properties, the remaining cases can easily be analyzed directly in the proof of Theorem~\ref{thm:1D-comp-ratio}. 

\mypar{Close to turning point.} We first prove two lemmata that help us establish the properties indicated above. Lemma~\ref{lem:growth-bounds} follows from the quotient rule of derivatives.

\begin{restatable}{lemma}{growthbounds}\label{lem:growth-bounds}
    Let $f, g, h: \mathbb{R}_{>0} \rightarrow \mathbb{R}_{>0}$ be differentiable functions such that $f(x) = \frac{g(x)}{h(x)}$ and $\frac{\text{\emph{d}} g}{\text{\emph{d}}x}, \frac{\text{\emph{d}} h}{\text{\emph{d}}x} > 0$ for any $x > 0$. Then, $\frac{\text{\emph{d}} f}{\text{\emph{d}} x} < 0$ if and only if $\frac{\text{\emph{d}} g}{\text{\emph{d}} x} / \frac{\text{\emph{d}} h}{\text{\emph{d}} x} < f(x)$. 
\end{restatable}

\begin{proof}
    Given $f(x) = \frac{g(x)}{h(x)}$, we get
    \begin{equation*}
        \frac{\text{d}f}{\text{d} x} = \frac{\frac{\text{d} g}{\text{d} x}h(x) - g(x)\frac{\text{d} h}{\text{d} x}}{h(x)^2}
    \end{equation*}
    
    Since $h(x)^2 > 0$, this gives us
    \begin{align*}
        \frac{\text{d} f}{\text{d} x} &< 0 &\iff\\
        \frac{\text{d} g}{\text{d} x}h(x) - g(x)\frac{\text{d} h}{\text{d} x} &< 0 &\iff\\
        \frac{\text{d} g}{\text{d} x}h(x) &< g(x)\frac{\text{d} h}{\text{d} x} &\iff\\
        \frac{\text{d} g}{\text{d} x}/\frac{\text{d} h}{\text{d} x} &< \frac{g(x)}{h(x)} &\iff\\
        \frac{\text{d} g}{\text{d} x}/\frac{\text{d} h}{\text{d} x} &< f(x). & \qedhere
    \end{align*}
\end{proof}

\begin{lemma}\label{lem:strip-ratio}
    Let $(T,r(s_r, d_r))$ be an instance where $p_\goal$ lies on a right \subpath of $\strategypath$ with slope $s$, and let $c^*(d_r) = \frac{\tau^* (r)}{|S_T|}$. If $\frac{1}{9} < s \leq1$, then $\frac{\text{\emph{d}} c^*}{\text{\emph{d}} d_r} < 0$. 
\end{lemma}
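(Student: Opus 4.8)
The plan is to apply Lemma~\ref{lem:growth-bounds} with $g(d_r) = \tau^*(r)$ and $h(d_r) = |S_T|$, so that $c^*(d_r) = g(d_r)/h(d_r)$. By that lemma it then suffices to verify that $\frac{\text{d}\tau^*}{\text{d}d_r}$ and $\frac{\text{d}|S_T|}{\text{d}d_r}$ are both positive, and that their quotient is strictly smaller than $c^*(d_r)$ itself. For the final comparison I would invoke the lower bound $c^*(r) \geq 9\sqrt{1+s^2}$ from Lemma~\ref{lem:lowerboundS}; thus it is enough to prove that the quotient of the two derivatives is strictly less than $9\sqrt{1+s^2}$.

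First I would compute $\frac{\text{d}|S_T|}{\text{d}d_r}$. Since the last segment of the geodesic $S_T$ is perpendicular to $r$, translating $r$ outward by an infinitesimal amount lengthens only this last segment, and by exactly that amount, while the remainder of the geodesic (from $p_0$ to the last terrain contact point) is unaffected. Hence $\frac{\text{d}|S_T|}{\text{d}d_r} = 1$, both in the unobstructed case where $|S_T| = d_r$ and in the obstructed case. Next, because $p_\goal$ lies on a right subpath of slope $s$, I would write $p_\goal$ as the intersection of the local segment $y = s x + c$ with the supporting line $y = s_r x + d_r\sqrt{1+s_r^2}$ of the ray, solve for its height $y$, and differentiate while holding $c$ and $s_r$ fixed, obtaining $\frac{\text{d}y}{\text{d}d_r} = \frac{s\sqrt{1+s_r^2}}{s - s_r}$. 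Since $\tau^*(r) = y\sqrt{1+s^2}/s$, this gives $\frac{\text{d}\tau^*}{\text{d}d_r} = \frac{\sqrt{1+s^2}\,\sqrt{1+s_r^2}}{s - s_r}$, which is positive because $s > 0 \geq s_r$.

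It then remains to show that the quotient $\frac{\sqrt{1+s^2}\,\sqrt{1+s_r^2}}{s-s_r}$ is strictly less than $9\sqrt{1+s^2}$, i.e.\ that $\sqrt{1+s_r^2} < 9(s - s_r)$. Writing $u = -s_r \geq 0$, this reads $\sqrt{1+u^2} < 9(s + u)$. The function $9(s+u) - \sqrt{1+u^2}$ has derivative $9 - u/\sqrt{1+u^2} > 0$ and takes the value $9s - 1 > 0$ at $u = 0$ precisely because $s > 1/9$, so it is positive for all $u \geq 0$. Combining the strict inequality $\frac{\text{d}\tau^*/\text{d}d_r}{\text{d}|S_T|/\text{d}d_r} < 9\sqrt{1+s^2} \leq c^*(d_r)$ with Lemma~\ref{lem:growth-bounds} yields $\frac{\text{d}c^*}{\text{d}d_r} < 0$.

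The step I expect to be the main obstacle is establishing $\frac{\text{d}|S_T|}{\text{d}d_r} = 1$ in the obstructed case: one must argue that for an infinitesimal perpendicular translation of $r$ the combinatorial structure of the geodesic (its sequence of terrain contact points) is stable, so that only the final perpendicular segment grows, and at unit rate. The remaining computations are routine. I note that only the hypothesis $s > 1/9$ is actually needed for the final inequality, the bound $s \leq 1$ being inherited from the surrounding case analysis.
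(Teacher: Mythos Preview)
Your proof is correct and follows the same outline as the paper's: apply Lemma~\ref{lem:growth-bounds} and compare the derivative quotient to $9\sqrt{1+s^2}$ via Lemma~\ref{lem:lowerboundS}. Two minor differences are worth noting. First, where you compute $\frac{\text{d}\tau^*}{\text{d}d_r}$ exactly as $\frac{\sqrt{1+s^2}\,\sqrt{1+s_r^2}}{s-s_r}$ and then show this is below $9\sqrt{1+s^2}$ for all $s_r\le 0$, the paper instead bounds the derivative uniformly by $\frac{\sqrt{1+s^2}}{s}$ (the value at $s_r=0$); establishing that cruder bound is precisely what consumes the hypothesis $s\le 1$, which---as you correctly observed---is not needed in your sharper version. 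Second, the obstacle you flag (proving $\frac{\text{d}|S_T|}{\text{d}d_r}=1$ in the obstructed case) is avoidable: the paper only uses $\frac{\text{d}|S_T|}{\text{d}d_r}\ge 1$, which follows directly because any path from $p_0$ to the ray translated outward by $\delta$ must first cross $r$ and then cover at least the perpendicular gap $\delta$, giving $|S_T(d_r+\delta)|\ge |S_T(d_r)|+\delta$. With $\ge 1$ in place of $=1$ your argument goes through unchanged, so the combinatorial-stability discussion can be dropped.
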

\begin{proof} 
    By Lemma~\ref{lem:growth-bounds}, to prove $\frac{\text{d} c^*}{\text{d} d_r} < 0$ it is sufficient to show that $\frac{\text{d} \tau^*(r)}{\text{d} d_r}/\frac{\text{d}|S_T|}{\text{d} d_r} < c^*(d_r)$. Since $\frac{\text{d}|S_T|}{\text{d} d_r} \geq 1$, we get that $\frac{\text{d} \tau^*(r)}{\text{d} d_r}/\frac{\text{d}|S_T|}{\text{d} d_r} \leq \frac{\text{d} \tau^*(r)}{\text{d} d_r}$. Consider decreasing $d_r$, i.e., moving the ray $r$ towards the origin. If $s \leq 1$, we can bound the ratio between the change in $\tau^*(r)$ and the change in~$d_r$ as follows.
    \[\frac{\text{d} \tau^*(r)}{\text{d} d_r} \leq \frac{\sqrt{s^2 + 1}}{s} < 9\sqrt{1+s^2}\leq c^*(d_r)\]

    The second step holds for $s > \frac{1}{9}$, and the final step follows from Lemma~\ref{lem:lowerboundS}.
\end{proof}

Lemma~\ref{lem:strip-ratio} implies that, as long as $p_\goal$ lies on a right \subpath of $\strategypath$, decreasing $d_r$ increases~$c^*(r)$. 

\begin{restatable}{lemma}{oneDmover}\label{lem:1D-move-r}
    In a worst-case instance~$(T, r(s_r,d_r))$ where $p_t$ lies on a right \subpath of~$\strategypath$, if $\frac{1}{9} < s \leq 1$, then $r$ is infinitesimally close to a turning point of~$\strategypath$.
\end{restatable}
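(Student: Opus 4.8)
The plan is to argue by contradiction, driving the ray $r$ towards a turning point using the monotonicity from Lemma~\ref{lem:strip-ratio}. Since $\frac{1}{9} < s \le 1$, that lemma applies and yields the implication noted directly below it: as long as $p_\goal$ lies on a right \subpath of $\strategypath$, decreasing $d_r$ (i.e., pushing $r$ towards $p_0$) strictly increases $c^*(r)$. The whole argument hinges on turning ``$r$ is not near a turning point'' into ``we are still free to decrease $d_r$'', at which point Lemma~\ref{lem:strip-ratio} forbids the instance from being worst-case.

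First I would record how $p_\goal$ moves as $d_r$ varies with $s_r$ and $T$ held fixed. A right \subpath is $x$-monotone, so lowering the ray makes $\strategypath$ enter $\vis{\goal}$ at a smaller arc length, which moves $p_\goal$ monotonically and continuously towards the origin-side endpoint of the \subpath; on a diagonal slope-$s$ segment this is immediate from intersecting a line of slope $s$ with the line carrying $r$ (slope $s_r \le 0$), and across the steeper terrain-following parts it still holds by $x$-monotonicity. This origin-side endpoint is exactly the turning point of $\strategypath$ at which the preceding left \subpath meets the current right \subpath. The key local observation is that while $p_\goal$ lies in the relative interior of the right \subpath, bounded away from both endpoints, a sufficiently small decrease of $d_r$ keeps $p_\goal$ on the same right \subpath and keeps it the point where $\strategypath$ first enters $\vis{\goal}$.

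The contradiction then follows. Assume $(T, r(s_r,d_r))$ is a worst-case instance with $p_\goal$ on a right \subpath, but that $r$ is not infinitesimally close to any turning point of $\strategypath$; then $p_\goal$ lies in the relative interior of the \subpath and is bounded away from its endpoints. By the previous paragraph we may decrease $d_r$ by a small positive amount while keeping $p_\goal$ on the same right \subpath as the first crossing, and by Lemma~\ref{lem:strip-ratio} this strictly increases $c^*(r)$, contradicting worst-case optimality. Hence $r$ must be infinitesimally close to a turning point of $\strategypath$, namely the origin-side turning point $\turningpoint$ bounding the right \subpath carrying $p_\goal$.

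I expect the main obstacle to be the bookkeeping around the defining crossing: I must check that a small decrease of $d_r$ does not let $\strategypath$ enter $\vis{\goal}$ on some earlier \subpath, which would make $p_\goal$ jump off the current right \subpath, and that the perturbed ray still realizes a legitimate instance. Both are consequences of continuity once $p_\goal$ is bounded away from a turning point, but they should be stated explicitly, alongside the remark that the worst case is only approached in the limit $p_\goal \to \turningpoint$ --- which is precisely what ``infinitesimally close'' records. The infinitesimal-step start of $\strategypath$ needs only a passing comment, since the same monotonicity governs the crossings near $p_0$.
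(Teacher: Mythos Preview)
Your overall plan --- assume the worst case is not near a turning point, perturb $d_r$ downward, and invoke Lemma~\ref{lem:strip-ratio} for a contradiction --- is the same as the paper's.  However, there is a genuine gap: you treat Lemma~\ref{lem:strip-ratio} as if it applies whenever $p_\goal$ lies on a right \subpath, but its hypothesis is that $p_\goal$ lies on a \emph{right segment of slope~$s$}.  A right \subpath of $\strategypath$ is only piecewise of slope~$s$; it also contains steeper terrain-following pieces, and on those the derivative bound $\frac{\text{d}\tau^*}{\text{d}d_r}\le \sqrt{1+s^2}/s$ used in the proof of Lemma~\ref{lem:strip-ratio} can fail (take a near-vertical terrain segment and a ray with $s_r$ bounded away from~$0$).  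So your one-step perturbation argument does not go through if, in the worst-case instance, $p_\goal$ happens to sit on such a steep piece; nothing in your write-up rules this out.

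The paper's proof handles exactly this obstruction and it is not just bookkeeping.  When the slope at $p_\goal$ is not~$s$, the path is routed over a local maximum~$\peak$, and the paper splits into two cases.  If a right \subpath is routed over~$\peak$ (so $\peak$ lies to the right of $p_0$), one computes directly that $|S_T|\ge y$ and $\tau^*(r)=y\sqrt{1+s^2}/s$ at the peak height $y$, giving $c^*(r)\le \sqrt{1+s^2}/s<9\sqrt{1+s^2}$, which is below the lower bound of Lemma~\ref{lem:lowerboundS} and hence not worst case.  If instead a left \subpath is routed over~$\peak$, the paper \emph{modifies the terrain} by lowering~$\peak$ to the highest intersected segment of $\projectedpath$, which cannot increase $|S_T|$ and leaves $\tau^*(r)$ unchanged, after which Lemma~\ref{lem:strip-ratio} applies again.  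Your continuity remark about $p_\goal$ not jumping to an earlier \subpath is fine for what it covers, but it does not substitute for these two case arguments; in particular the direct ratio computation for the steep-segment case is a separate idea you need to supply.
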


\begin{proof}
    Assume for contradiction that the lowest intersection of~$r$ and $\strategypath$ lies on a right \subpath of~$\strategypath$, but $r$ is further than $\varepsilon>0$ away from a turning point of~$\strategypath$.
    
    Given ray $r(s_r, d_r)$, decreasing $d_r$ gives a worse competitive ratio as long as $r$ intersects $\strategypath$ on a right segment of $\strategypath$ with slope $s$ according to Lemma~\ref{lem:strip-ratio}, contradicting that~$(T,r)$ is a worst case. By construction, as soon as we can no longer move the ray towards the origin while keeping its intersection point on a right segment of $\strategypath$ with slope~$s$, we must either encounter a turning point of $\strategypath$, or the slope of~$\strategypath$ is not equal to~$s$.

    \begin{figure}
        \centering
        \includegraphics{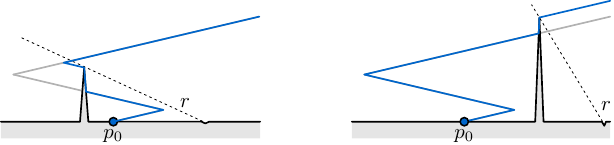}
        \caption{By moving $r$ to the left until we are longer on a right segment with slope $s$, we can either hit a left \subpath (case on the left), or a right segment with slope greater than $s$ (case on the right).}
        \label{fig:close-to-turning}
    \end{figure}

    In the former case, the lemma is proven, so assume that the latter case applies. When the slope of~$\strategypath$ is not~$s$, $\strategypath$ must route along~$T$, and hence~$r$ would encounter a local maximum~$\peak$ of~$T$. We distinguish two cases (see Figure~\ref{fig:close-to-turning}): either a left \subpath of~$\strategypath$ is routed over~$\peak$, or a right \subpath of~$\strategypath$ is routed over~$\peak$. When a left \subpath of~$\strategypath$ is routed over~$\peak$, lowering~$\peak$ to the highest intersected segment of $\projectedpath$ possibly decreases $|S_T|$, while $\tau^*(r)$ remains the same. Hence, the competitive ratio~$c^*(r)$ has not decreased either. Next, Lemma~\ref{lem:strip-ratio} can once again be applied, resulting in a contradiction.
    
    When a right \subpath of~$\strategypath$ is routed over~$\peak$, then~$\peak$ must be located on the right side of~$p_0$, otherwise a turning point of~$\strategypath$ is located at~$\peak$. Let~$y$ be the height of $\peak$. This means that $|S_T|\geq y$ and $\tau^*(r) = \frac{y\cdot\sqrt{1+s^2}}{s}$, so we get a competitive ratio of
    \begin{equation*}
        \frac{\tau^*(r)}{|S_T|}\leq \frac{y\sqrt{1+s^2}}{ys} = \frac{\sqrt{1+s^2}}{s} < 9\sqrt{1+s^2}
    \end{equation*}
    In the last step we use that $s>\frac{1}{9}$. As this is below the lower bound of Lemma~\ref{lem:lowerboundS} this contradicts that~$(T,r)$ is a worst-case instance.
\end{proof}

\mypar{Flat visibility rays.} Next, we deal with all visibility rays~$r(s_r,d_r)$ for which $s_r > -s$, which we call \emph{flat} visibility rays. All other visibility rays, which have a slope of at most~$-s$, we define as \emph{steep} visibility rays. 
We show that~$r$ is never flat in a worst-case instance, and $p_t$ must then lie on a right \subpath of~\strategypath.

\begin{restatable}{lemma}{flatrays}\label{lem:flat-rays}
    In a worst-case instance~$(T, r(s_r,d_r))$, if $\frac{2}{9} < s \leq 1$ then $s_r \leq -s$ and $p_t$ lies on a right \subpath of \strategypath.
\end{restatable}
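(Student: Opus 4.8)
The plan is to argue by contradiction: assume $(T, r(s_r,d_r))$ is a worst-case instance and show that if $r$ is flat (that is, $s_r > -s$) or if $p_\goal$ lies on a left \subpath, then $c^*(r)$ drops strictly below the value $9\sqrt{1+s^2}$ guaranteed by Lemma~\ref{lem:lowerboundS}, which is the desired contradiction. The single inequality driving every case is $c^*(r) = \tau^*(r)/|S_T| \le \tau^*(r)/d_r$, valid because $|S_T| \ge |S| = d_r$; together with $\tau^*(r) = y\sqrt{1+s^2}/s$ for $p_\goal = (x,y)$, this reduces each case to a lower bound on the ratio $d_r/y$. Writing the perpendicular distance from $p_0$ to the line supporting $r$ as $d_r = (y - s_r x)/\sqrt{1+s_r^2}$ (the sign is fixed since $s_r \le 0$), the whole argument becomes a comparison between this quantity and $y$ on each type of \subpath.

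First I would treat a flat ray whose crossing lies on a right \subpath. By Lemma~\ref{lem:1D-move-r} we may assume $r$ is infinitesimally close to a turning point of \strategypath, so $x > 0$ and $y = 3 s x$; substituting gives
\[
    c^*(r) \;\le\; \frac{\sqrt{1+s^2}\,\sqrt{1+s_r^2}}{s\,\bigl(1 + |s_r|/(3s)\bigr)}.
\]
The right-hand side is largest at $|s_r| = 0$, where it equals $\sqrt{1+s^2}/s$, and this is strictly below $9\sqrt{1+s^2}$ exactly when $s > 1/9$. Hence on a right \subpath the worst-case ray cannot be flat, which already yields $s_r \le -s$ in that case.

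The remaining and harder task is to exclude a first crossing on a left \subpath entirely, thereby forcing $p_\goal$ onto a right \subpath. The crux is that $p_\goal$ is the \emph{lowest} intersection of \strategypath with $r$: a naive evaluation at the top turning point of a left \subpath would even let $d_r \to 0$ and $c^*(r) \to \infty$, but such a ray passes close to $p_0$ and is therefore crossed much earlier, at small height, by one of the path's low excursions. I would make this precise through the self-similarity of \projectedpath: the scaling-by-two identity $h^i_r(x) = 2\,h^{i-1}_\ell(-x/2)$ maps a left-\subpath crossing to a right-\subpath crossing of a scaled-and-reflected instance, and since both $\tau^*(r)$ and $|S_T|$ scale linearly, $c^*(r)$ is preserved. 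Reconciling this reflection with the sign constraint $s_r \le 0$ and with the first-crossing requirement is the main obstacle, and it is exactly here that the stronger threshold $s > 2/9$ (rather than $1/9$) is needed to keep every resulting inequality strict. Once the crossing is known to lie on a right \subpath, the first part applies verbatim and gives $s_r \le -s$, completing the proof.
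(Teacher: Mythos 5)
The decisive part of this lemma --- and the part where the threshold $s > \tfrac{2}{9}$ actually arises --- is ruling out first crossings on left \subpaths, and your plan for that part does not work. The mirror-and-scale identity $h^i_r(x) = 2\,h^{i-1}_\ell(-x/2)$ maps an instance whose ray originates to the \emph{right} of $p_0$ and is first crossed on a \emph{left} \subpath to an instance whose ray originates to the \emph{left} of $p_0$ and is first crossed on a right \subpath. But your first case, and Lemmata~\ref{lem:strip-ratio} and~\ref{lem:1D-move-r} that it invokes, are all formulated under the paper's normalization that the ray originates to the right of $p_0$; re-normalizing the reflected instance means applying the very same mirror-and-scale map again, which returns you to the configuration you started from. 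The symmetry therefore only shows that the supremum of $c^*$ over instances (ray right, crossing left) equals the supremum over (ray left, crossing right); it can never compare left-\subpath crossings with right-\subpath crossings \emph{for the same side of origin}, which is what the lemma demands. You flag exactly this reconciliation as ``the main obstacle'' and assert that $s > \tfrac{2}{9}$ resolves it, but no argument is given, and none can be extracted from self-similarity alone. The paper handles this case by a direct argument: it slides $p_\goal$ along $r$ to the intersection $q$ of (the extension of) $r$ with $\projectedpath$, which only increases $\tau^*(r)$ and decreases $|S_T|$, concludes that in the worst case $p_\goal$ lies on a left segment $h^i_\ell$, and then bounds $\tau^*(r) \le 3\cdot 2^{i-1}\sqrt{1+s^2}$ and $|S_T| \ge 3s\cdot 2^{i-2}$, giving $c^*(r) \le 2\sqrt{1+s^2}/s < 9\sqrt{1+s^2}$ precisely when $s > \tfrac{2}{9}$; steep rays crossing left \subpaths are then excluded by a separate projection argument exploiting that the obstruction lies left of $p_0$, so that $|S_T|$ is unaffected. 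None of this is present in your proposal, so the core of the lemma remains unproven.

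Your first case also contains a geometric error, though a repairable one. For a flat ray first crossed on a right \subpath, shrinking $d_r$ (Lemma~\ref{lem:strip-ratio}) slides the crossing \emph{down} the right \subpath, so the turning point it approaches is the \emph{lower} one --- the upper end of a left \subpath --- which lies at $x < 0$ with $y = 3s|x|$ in the unobstructed case (and $y \ge 3s|x|$ in general), not at $x > 0$ with $y = 3sx$. With the sign corrected, your perpendicular-distance bound becomes
\[
c^*(r) \;\le\; \frac{3\sqrt{1+s^2}\,\sqrt{1+s_r^2}}{3s - |s_r|},
\]
which is maximized as $|s_r| \to s$ (not at $s_r = 0$; note also that even in your version the maximum sits at $|s_r|=0$ only for $s \le 1/\sqrt{3}$, not for all $s \le 1$), giving $3(1+s^2)/(2s)$. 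This is still below $9\sqrt{1+s^2}$ whenever $s > 1/\sqrt{35}$, so the conclusion of your first case survives, but not via the computation you wrote. For comparison, the paper avoids this computation entirely: it pushes $p_\goal$ all the way to that turning point and absorbs the flat-ray-on-right-\subpath case into the left-\subpath analysis --- the very case your proposal leaves open.
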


\begin{proof}
    First, assume that $s_r > -s$, we distinguish two cases depending on whether $p_\goal$ lies on a right or left \subpath of~$P$.

    When~$p_\goal$ lies on a right \subpath of~$\strategypath$, Lemma~\ref{lem:1D-move-r} implies that the visibility ray~$r$ is infinitesimally close to a turning point. Because $s_r > -s$, it must be that $r$ passes infinitesimally close to the upper turning point of a left \subpath of~$\strategypath$. In this case, Lemma~\ref{lem:strip-ratio} can even be applied until $p_\goal$ lies on the upper turning point of the left \subpath, resulting in the next case.

    \begin{figure}
        \centering
        \includegraphics{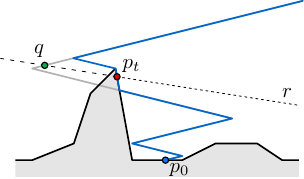}
        \caption{A flat visibility ray $r$ that intersects a left subpath.}
        \label{fig:lemma-flat}
    \end{figure}
    If $p_\goal$ lies on a left \subpath but does not lie on~$\projectedpath$, then consider the intersection point $q$ of the extension of $r$ with $\projectedpath$, see Figure~\ref{fig:lemma-flat}. Observe that $q$ is at least as high as $p_\goal$. We can therefore consider the instance where $q=p_\goal$, as this can only decrease $|S_T|$ and increase $\tau^*(r)$. If $q$ lies on a right segment of~$\projectedpath$, then  $p_\goal$ is moved to this right segment as well, which contradicts the assumption that $p_\goal$ lies on a left \subpath. Hence, in the worst case $p_\goal$ lies on a left segment.
    
    Now that we know that in the worst case~$p_\goal$ lies somewhere on the $i$th left segment, we can derive the following bounds on~$\tau^*(r)$ and~$|S_T|$. We use the length 
    of~$\projectedpath$ up to the upper end of~$h^i_\ell$ as an upper bound on~$\tau^*(r)$.
    Thus~$\tau^*(r)\leq 3\cdot2^{i-1}\cdot \sqrt{1+s^2}$. Similarly, as~$r$ passes above the lower end of~$h^{i}_\ell$, the height of the lower end of~$h^{i}_\ell$ is a lower bound on~$|S_T|$. So,~$|S_T|\geq 3s\cdot 2^{i-2}$.
    Flat visibility rays hence result in 
    \begin{equation*}
        c^*(r) = \frac{\tau^*(r)}{|S_T|} \leq \frac{3\cdot2^{i-1}\cdot \sqrt{1+s^2}}{3s\cdot 2^{i-2}}\leq \frac{2 \sqrt{1+s^2}}{s} < 9\sqrt{1+s^2}.
    \end{equation*}

    The latter inequality holds when $s>\frac{2}{9}$. As this competitive ratio is below the lower bound of Lemma~\ref{lem:lowerboundS}, this contradicts that~$(T,r)$ is a worst case instance. It follows that in a worst-case instance $s_r \leq -s$.

    To conclude the proof we consider the case where $p_t$ lies on a left \subpath of~$\strategypath$ and $s_r \leq -s$. Observe that this is possible only when $T$ intersects $\projectedpath$ on the left side of~$p_0$ (similar to Figure~\ref{fig:lemma-flat}). Again consider the intersection point $q$ of $r$ with $\projectedpath$, and observe that $q$ must lie on a right segment of~$\projectedpath$ since $s_r \leq -s$. We can then apply the following argument from earlier:~$q$ is higher than $p_\goal$. Thus the instance where $q=p_\goal$ leads to a worse competitive ratio, since $|S_T|$ is unaffected by removing the obstruction of~$\projectedpath$ left of~$p_0$. This contradicts that~$(T,r)$ is a worst case instance.
\end{proof}

Lemma~\ref{lem:flat-rays} shows that for an upper bound on the competitive ratio we do not have to consider flat visibility rays~$r$.

From now on, we thus consider only steep visibility rays with~$p_t$ on a right \subpath. Let~$\turningpoint$ denote the turning point infinitesimally close to~$r$. 
Note that~$\turningpoint$ must lie on
the final \emph{left} segment of $\projectedpath$ that is on the search path. We denote this segment by $\turnsegment$.

\mypar{Obstructed search path.}
For steep visibility rays, first consider the case where~$\turningpoint$ is not a turning point of $\projectedpath$, i.e. $T$ obstructs the right segment before~$\turnsegment$ (see Figure~\ref{fig:highest-ratio}).
We call a local maximum of $T$ a peak, denoted by~$\peak$. We prove that a worst-case instance $(T,r)$ has the following three properties.
\begin{figure}
    \centering
    \includegraphics{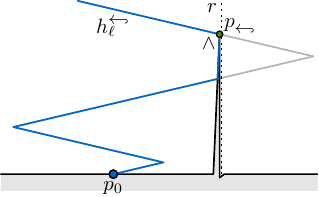}
    \caption{The search path is obstructed and $r$ is steep.}
    \label{fig:highest-ratio}
\end{figure}
\begin{enumerate}
    \item If a peak lies on~$\projectedpath$ at~$\turningpoint$, then the visibility ray~$r$ is vertical (Lemma~\ref{lem:vertical-r});
    \item if the visibility ray~$r$ is vertical, then a peak lies on~$\projectedpath$ at~$\turningpoint$ (Lemma~\ref{lem:1D-move-peak});
    \item either a peak lies on~$\projectedpath$ at~$\turningpoint$, or the visibility ray~$r$ is vertical (Lemma~\ref{lem:1D-peak-or-vertical}).
\end{enumerate}

The following lemma follows directly from the above statements.

\begin{restatable}{lemma}{oneDpeakvertical}\label{lem:1D-peak-and-vertical}
    Let~$(T, r(s_r,d_r))$ be a worst-case instance where~$\turningpoint$ is not a turning point of $\projectedpath$.
    If $\frac{2}{9} < s < \frac{4}{9}$, then a peak lies on~$\turnsegment$ at~$\turningpoint$, and $s_r=-\infty$.
\end{restatable}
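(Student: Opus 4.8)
The plan is to treat this statement as an immediate corollary of the three properties enumerated just above it: Lemma~\ref{lem:vertical-r} (a peak at $\turningpoint$ forces $r$ vertical), Lemma~\ref{lem:1D-move-peak} ($r$ vertical forces a peak at $\turningpoint$), and Lemma~\ref{lem:1D-peak-or-vertical} (at least one of these two events holds in a worst case). Since it has already been established that $\turningpoint$ lies on the final left segment $\turnsegment$ of $\projectedpath$, the phrase ``a peak lies on $\projectedpath$ at $\turningpoint$'' is literally the same as ``a peak lies on $\turnsegment$ at $\turningpoint$'', and ``$r$ is vertical'' is exactly the condition $s_r = -\infty$. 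Thus the only work left is to combine the three implications and read off that \emph{both} conclusions hold simultaneously.

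Concretely, I would write $A$ for the event ``a peak lies at $\turningpoint$'' and $B$ for ``$r$ is vertical'', so that the three preceding lemmas read $A \Rightarrow B$, $B \Rightarrow A$, and $A \lor B$ respectively. Starting from the dichotomy $A \lor B$ of Lemma~\ref{lem:1D-peak-or-vertical}, I split into two cases. If $A$ holds, then Lemma~\ref{lem:vertical-r} yields $B$, so $A \land B$. If instead $B$ holds, then Lemma~\ref{lem:1D-move-peak} yields $A$, so again $A \land B$. In either branch both events are forced, which is precisely the assertion that a peak lies on $\turnsegment$ at $\turningpoint$ and $s_r = -\infty$. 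The hypothesis $\frac{2}{9} < s < \frac{4}{9}$ plays no role in this propositional step; it is inherited from the three lemmas, each of which needs the slope confined to this range for its perturbation estimates to close, so I would simply carry it as a standing assumption.

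There is essentially no obstacle in the combination itself, which is pure propositional logic (from $A \lor B$, $A \Rightarrow B$ and $B \Rightarrow A$ one concludes $A \land B$). The substantive content lives entirely in the three preceding lemmas, which I would establish by the same perturbation template used throughout this section: assume $(T, r)$ is worst-case and show that, unless the stated configuration is attained, one can nudge either the ray $r$ toward vertical or the obstructing peak $\peak$ toward $\turningpoint$ and track the effect on $\tau^*(r)$ and $|S_T|$ to strictly increase $c^*(r)$, contradicting worst-casedness. I expect the dichotomy Lemma~\ref{lem:1D-peak-or-vertical} to be the hardest of the three, since it must rule out ``mixed'' worst cases in which neither a peak sits exactly at $\turningpoint$ nor $r$ is vertical; this is where the quantitative bounds on $s$, together with the lower bound of Lemma~\ref{lem:lowerboundS} and the monotonicity criterion of Lemma~\ref{lem:growth-bounds}, are needed to guarantee that the perturbation strictly improves the ratio rather than merely preserving it.
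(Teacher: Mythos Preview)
Your proposal is correct and matches the paper's approach exactly: combine Lemma~\ref{lem:1D-peak-or-vertical} (the dichotomy $A \lor B$) with Lemmata~\ref{lem:vertical-r} and~\ref{lem:1D-move-peak} (the implications $A \Rightarrow B$ and $B \Rightarrow A$) to conclude $A \land B$. In fact, your branch-by-branch assignment of the auxiliary lemmata is the right one; the paper's own proof appears to have the references to Lemmata~\ref{lem:vertical-r} and~\ref{lem:1D-move-peak} interchanged in the two cases.
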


\begin{proof}
    We can apply Lemma~\ref{lem:1D-peak-or-vertical} to show that in the worst case, either~$\peak$ lies on the left segment~$\turnsegment$ of~$\projectedpath$ and $\peak$ coincides with~$\turningpoint$, or that we can rotate~$r$ to be vertical. In the former case we use Lemma~\ref{lem:1D-move-peak}, while in the latter case we apply Lemma~\ref{lem:vertical-r}. 
\end{proof}

Next we prove Lemmata~\ref{lem:vertical-r}-\ref{lem:1D-peak-or-vertical}, to prove Lemma~\ref{lem:1D-peak-and-vertical}. Let~$(T, r(s_r,d_r))$ be a worst-case instance where $\turningpoint$ is not a turning point of $\projectedpath$ and let $\peak$ be the last peak on $\strategypath$ before $\turningpoint$. By Lemmata~\ref{lem:1D-move-r} and~\ref{lem:flat-rays}, $r$ is steep and infinitesimally close to~$\turningpoint$. 

\begin{lemma}\label{lem:vertical-r}
    If the peak~$\peak$ lies on left segment~$\turnsegment$ of~$\projectedpath$, and hence coincides with~$\turningpoint$, then~$r$ is vertical.
\end{lemma}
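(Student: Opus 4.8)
The plan is to fix the terrain together with the peak $\peak$ and to argue that, among all steep rays grazing $\peak$, the vertical one gives the largest value of $c^*(r)$; since $(T,r)$ is a worst-case instance, $r$ must then be vertical. The key starting observation is that, because $\peak$ coincides with $\turningpoint$ and (by Lemma~\ref{lem:1D-move-r}) $r$ is infinitesimally close to $\turningpoint$, the crossing point satisfies $p_\goal=\peak$ regardless of the slope $s_r$. Writing $\peak=(x_p,y_p)$, this makes $\tau^*(r)=y_p\sqrt{1+s^2}/s$ the \emph{same} for every steep ray through $\peak$. Hence maximizing $c^*(r)=\tau^*(r)/|S_T|$ is equivalent to \emph{minimizing} $|S_T|$, and the whole lemma reduces to a monotonicity statement about the competitor's distance as $r$ is rotated about $\peak$.

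Next I would rotate $r$ about $\peak$ toward the vertical and track $|S_T|$. As long as $r$ stays steep, its lowest intersection with $\strategypath$ remains pinned at the apex $\peak$, so the rotation keeps $p_\goal=\peak$ and is therefore legitimate. For the perpendicular distance one has $d_r=|y_p-s_rx_p|/\sqrt{1+s_r^2}$, which tends to $|x_p|$ as $s_r\to-\infty$. Differentiating (or applying Lemma~\ref{lem:growth-bounds}) shows that on the steep range this quantity attains its smallest value in the vertical limit, where it equals the horizontal offset $|x_p|$ of the peak. The one subtlety here is that $d_r$ is not globally monotone in $s_r$, so I would invoke Lemma~\ref{lem:lowerboundS} to guarantee that in a worst case the peak must be tall enough (that is, $y_p$ large relative to $|x_p|$) for the minimum to be realized in the vertical limit rather than at the flat end $s_r=-s$; otherwise the ratio would already fall below $9\sqrt{1+s^2}$, contradicting that $(T,r)$ is a worst case.

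Finally I would pass from the perpendicular distance back to the geodesic $|S_T|$. Since $|S_T|\ge d_r$ always, every non-vertical steep ray has $|S_T|\ge d_r>|x_p|$, whereas for the vertical ray the perpendicular segment from $p_0$ is horizontal and reaches the foot below the level of $\peak$, so that $|S_T|$ attains its minimum there. Consequently $c^*(r)$ is maximized by the vertical ray and a worst-case instance must have $s_r=-\infty$. I expect the hard part to be exactly this last step: one must argue carefully that the terrain does not obstruct the (near-)horizontal competitor path associated with the vertical ray, while any obstruction can only \emph{lengthen} $|S_T|$ for the tilted rays, so that the terrain never lets a non-vertical ray beat the vertical one. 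This is precisely where the hypothesis that $\peak$ lies on $\turnsegment$ is used, since it places $p_\goal$ at the apex of the peak rather than on its flank and thereby fixes the geometry of the competitor's shortest crossing.
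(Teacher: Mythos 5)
Your argument rests on the opening claim that, since $\peak$ coincides with $\turningpoint$ and $r$ is infinitesimally close to $\turningpoint$, the crossing point satisfies $p_\goal=\peak$ for every steep ray through $\peak$, so that $\tau^*(r)$ is fixed and the lemma reduces to minimizing $|S_T|$. This misreads the worst-case geometry. The ray passes infinitesimally \emph{beyond} $\turningpoint$: the search path climbs to $\peak$, does \emph{not} cross $r$ there, turns around (it has hit $\turnsegment$ and must follow $\projectedpath$ up-leftward), and only enters $\vis{\goal}$ much later, on the next right segment $h^i_r$ of $\projectedpath$. Indeed, if the path did cross $r$ essentially at the peak, then $c^*(r)\le\sqrt{1+s^2}/s<9\sqrt{1+s^2}$ (exactly the computation in the proof of Lemma~\ref{lem:1D-move-r}), contradicting Lemma~\ref{lem:lowerboundS}, so such an instance is never a worst case; moreover the bound $\tau^*(r)\le(d_r+8\cdot2^{i-3})\sqrt{1+s^2}$ used in Theorem~\ref{thm:1D-comp-ratio} only makes sense with $p_\goal$ on $h^i_r$. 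Consequently $\tau^*(r)$ is \emph{not} invariant as you rotate $r$ about $\peak$: the intersection of $r$ with $h^i_r$ slides up-rightward as the ray steepens, so $\tau^*(r)$ strictly increases toward the vertical. This increase is the main engine of the paper's proof, and your reduction throws it away.

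The second half of your argument is also backwards. For the vertical ray the foot of the horizontal perpendicular from $p_0$ lies on the portion of the vertical line \emph{below} $\peak$, but that portion of the visibility ray runs inside the pit behind the peak; touching it there does not make $\goal$ visible, so the geodesic $S_T$ must reach the ray essentially at the apex and $|S_T|=\sqrt{x_p^2+y_p^2}$, not $|x_p|$. In fact the vertical ray (weakly) \emph{maximizes} $|S_T|$ among steep rays through $\peak$: any tilted steep ray whose perpendicular foot from $p_0$ lands above $\peak$ has $|S_T|=(y_p+|s_r|x_p)/\sqrt{1+s_r^2}<\sqrt{x_p^2+y_p^2}$. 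So as the ray steepens, $\tau^*(r)$ and $|S_T|$ can both grow, and the whole content of the lemma is that the ratio still grows; a one-sided monotonicity claim about $|S_T|$ cannot deliver that. The paper resolves this interplay with a case distinction your proposal lacks: if the perpendicular from $p_0$ to $r$ passes above $\turningpoint$, the peak (with its attached ray) is first slid leftwards along $\turnsegment$, which does not decrease $c^*(r)$, until the perpendicular no longer passes above $\turningpoint$; from that point on $|S_T|$ is pinned to the distance to $\peak$, so rotating $r$ to the vertical leaves $|S_T|$ (weakly) smaller while $\tau^*(r)$ strictly increases, giving the contradiction.
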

\begin{proof}
    Assume for contradiction that $\peak$ coincides with~$\turningpoint$ and that $r$ is not vertical.
    We distinguish between two cases: either the line segment through $p_0$ perpendicular to $r$ passes above~$\turningpoint$, or not.
    In the former case, we construct the terrain~$T'$ from~$T$ by moving~$\peak$ leftwards along~$\turnsegment$, until we are in the latter case. 
    This does not affect~$c^*(r)$.
    In the latter case, we rotate~$r$ around~$\turningpoint$ to become more vertical, resulting in a higher competitive ratio: $S_T$ becomes smaller and $\tau^*(r)$ becomes larger. This contradicts that~$(T,r)$ is worst case. 
\end{proof}

\begin{lemma}\label{lem:1D-move-peak}
    If $r$ is vertical, then the peak~$\peak$ lies on left segment~$\turnsegment$ of~$\projectedpath$, and hence coincides with~$\turningpoint$, for $s< \frac{4}{9}$.
\end{lemma}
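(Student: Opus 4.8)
The plan is to argue by contradiction in the worst-case style used throughout this section. Starting from a worst-case instance $(T,r(s_r,d_r))$ in which $r$ is vertical, I assume that the last peak $\peak$ before $\turningpoint$ lies strictly below $\turningpoint$ (so $\peak$ does not lie on $\turnsegment$), and I exhibit a modification that does not decrease $c^*(r)$ and drives $\peak$ upward until it reaches $\turningpoint$. Together with Lemma~\ref{lem:vertical-r}, this forces a worst-case instance to have $\peak$ at $\turningpoint$. The starting observation is that, since $\peak$ is the \emph{last} peak before $\turningpoint$, the search path $\strategypath$ is unobstructed between $\peak$ and $\turningpoint$; I may therefore keep the unobstructed tail of $\strategypath$ as fixed slope-$s$ pieces and slide $\peak$ upward toward $\turningpoint$ along the terrain.

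The key steps track the two relevant quantities as $\peak$ is raised. Raising $\peak$ lifts the unobstructed tail of $\strategypath$, hence raises the height $y$ of $p_\goal$ and increases $\tau^*(r)=y\sqrt{1+s^2}/s$. For $|S_T|$ I use that $r$ is vertical, so the final segment of the geodesic $S_T$ is horizontal and the binding obstruction for $S_T$ is exactly the last peak $\peak$: the geodesic is a straight segment from $p_0$ to $\peak$ followed by a horizontal segment to $r$. Raising $\peak$ also increases $|S_T|$, so I would compare the two growth rates via Lemma~\ref{lem:growth-bounds}, showing the growth of $\tau^*(r)$ dominates and hence $\frac{\text{d} c^*}{\text{d} h_\peak}>0$. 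Equivalently, I can keep $\turningpoint$, $p_\goal$, and $r$ fixed and slide $\peak$ up the final slope-$s$ segment; then $\tau^*(r)$ is unchanged while the two-segment geodesic to the vertical ray strictly shortens. Either way $c^*(r)$ does not decrease, contradicting that $(T,r)$ is a worst-case instance unless $\peak$ already coincides with $\turningpoint$.

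The hypothesis $s<\tfrac{4}{9}$ enters exactly where I expect the real work to lie, namely in keeping the modified instance inside the present case. As $\peak$ is raised it must continue to lift $\strategypath$ strictly above the corresponding corner of $\projectedpath$, so that $\turningpoint$ remains a turning point of $\strategypath$ that is not a turning point of $\projectedpath$ (otherwise we fall into the separate situation of Lemma~\ref{lem:1D-unobstructed}); moreover the simple two-segment description of $S_T$ must stay valid, i.e. the geodesic must keep routing over $\peak$ rather than over an earlier feature or straight to the ray. Making these geometric side conditions precise, checking that a peak placed exactly at $\turningpoint$ on $\turnsegment$ is still realizable, and verifying that the resulting inequality for $\frac{\text{d} c^*}{\text{d} h_\peak}>0$ (using the lower bound $c^*(r)\ge 9\sqrt{1+s^2}$ of Lemma~\ref{lem:lowerboundS}) holds precisely when $s<\tfrac{4}{9}$, is the main obstacle. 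Once it is settled, the limiting configuration has $\peak$ on $\turnsegment$ at $\turningpoint$, which is the claim.
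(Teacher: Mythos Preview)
Your proposal has a genuine gap rooted in a misreading of the geometry. You write that ``raising $\peak$ lifts the unobstructed tail of $\strategypath$, hence raises the height $y$ of $p_\goal$ and increases $\tau^*(r)$''. But $p_\goal$ does \emph{not} lie on the slope-$s$ segment between $\peak$ and $\turningpoint$; it lies on the \emph{next} right \subpath of $\strategypath$, after $\strategypath$ has rejoined $\projectedpath$ at $\turningpoint$ and followed $\turnsegment$ leftward. Once $\strategypath$ rejoins $\projectedpath$, the rest of the path is fixed independently of $\peak$. So with $r$ held fixed, raising $\peak$ leaves $p_\goal$ and hence $\tau^*(r)$ unchanged, while (when $S_T$ routes over $\peak$) $|S_T|$ \emph{increases}. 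Your Approach~A therefore pushes $c^*(r)$ in the wrong direction. Your alternative Approach~B (sliding $\peak$ along the slope-$s$ segment toward $\turningpoint$, keeping $\turningpoint$ and $r$ fixed) is closer in spirit to the argument of Lemma~\ref{lem:1D-peak-or-vertical} than to the present lemma, and in any case does not produce the threshold $s<\tfrac{4}{9}$; your remark that this bound will emerge from the growth-rate inequality is not borne out by either of your two modifications.

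The missing idea is that when $\peak$ is raised, $\turningpoint$ slides \emph{left} along $\turnsegment$, and the paper then invokes Lemma~\ref{lem:strip-ratio} to move $r$ left by the same horizontal amount $d$ so that $r$ remains infinitesimally close to $\turningpoint$. Now \emph{both} quantities decrease: $\tau^*(r)$ drops by $d\sqrt{1+s^2}$ (since $p_\goal$ moves down along the next right segment), while $|S_T|$ drops by at least $d(1-2s)$ (the ray moved left by $d$, the peak rose by $2ds$). Comparing these decreases via Lemma~\ref{lem:growth-bounds} against the lower bound $c^*(r)\ge 9\sqrt{1+s^2}$ yields $\tfrac{\sqrt{1+s^2}}{1-2s}<9\sqrt{1+s^2}$, i.e.\ exactly $s<\tfrac{4}{9}$. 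The coupling of $r$ to the moving turning point is the step your proposal lacks.
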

\begin{proof}
    Assume for contradiction that $r$ is vertical and that $\peak$ does not lie on left segment~$\turnsegment$.
    The height value of $T$ at the $x$-coordinate of $\peak$ can be increased towards $\projectedpath$, so that $\peak$ will lie slightly higher. This changes $\strategypath$, as the turning point~$\turningpoint$ moves to the left by some arbitrarily small distance $d$ (see Figure~\ref{fig:turning-point-move}). By Lemma~\ref{lem:strip-ratio}, in the worst case $r$ also moves to the left by distance $d$. Due to the slope $s$ of $\projectedpath$, $\peak$ must have been moved up by a distance of $2ds$. This means $|S_T|$ decreases by at least $d$, due to $r$ moving to the left and being vertical, and increases by at most $2ds$, due to $\peak$ moving up: in total $|S_T|$ decreases by at least $d(1-2s)>0$, for $s < 1/2$.
    
    On the other hand, $\tau^*(r)$ also decreases. With $r$ moving $d$ towards $\peak$, $\tau^*(r)$ decreases by $d \sqrt{1+s^2}$.
    We now consider the ratio $\frac{\text{d} \tau^*(r)}{\text{d} \peak} / \frac{\text{d} |S_T|}{\text{d} \peak} \leq \frac{d \sqrt{1+s^2}}{d(1-2s)}$ between the decrease of~$\tau^*(r)$ and $|S_T|$. As this ratio is below the lower bound of $9 \sqrt{1+s^2}$ of Lemma~\ref{lem:lowerboundS} for $s<4/9$, Lemma~\ref{lem:growth-bounds} implies that slightly moving~$\peak$ towards~$\turnsegment$ increases $c^* (r)$, contradicting that~$(T,r)$ is a worst-case instance.
\end{proof}

\begin{lemma}\label{lem:1D-peak-or-vertical}
    At least one of the following holds: the peak~$\peak$ lies on left segment~$\turnsegment$ of~$\projectedpath$, and hence coincides with~$\turningpoint$, or~$r$ is vertical.
\end{lemma}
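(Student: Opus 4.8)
The plan is to argue by contradiction. Suppose $(T,r(s_r,d_r))$ is a worst-case instance in which \emph{neither} property holds: $\peak$ lies strictly below $\turnsegment$, so there is room to raise it, and $r$ is not vertical, so $s_r$ is finite with $s_r\le -s$ by Lemma~\ref{lem:flat-rays}. By Lemma~\ref{lem:1D-move-r} the ray $r$ is infinitesimally close to $\turningpoint$, and $\turningpoint$ is the point where the slope-$s$ climb leaving $\peak$ first meets $\turnsegment$; consequently raising $\peak$ by $2ds$ shifts $\turningpoint$ (and, by Lemma~\ref{lem:strip-ratio}, the ray $r$ tracking it) left by $d$ and up by $ds$. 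Finally $\tau^*(r)=y\sqrt{1+s^2}/s$ depends only on the height $y$ of $\turningpoint$. I will show that every such configuration admits an admissible perturbation strictly increasing $c^*(r)$, so that the only worst cases are the two claimed ones.

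The first step is a case split on whether the perpendicular segment $S$ from $p_0$ to $r$ is obstructed by $\peak$, the same dichotomy as in Lemma~\ref{lem:vertical-r}. If $S$ is unobstructed then $|S_T|=d_r$ equals the perpendicular distance from $p_0$ to $r$. Raising $\peak$ towards $\turnsegment$ then increases the height $y$ of $\turningpoint$, hence $\tau^*(r)$, while the accompanying shift of $r$ changes the numerator $y-s_r x$ of $d_r$ by $d(s+s_r)\le 0$; thus $|S_T|$ does not increase and $c^*(r)$ strictly increases. Since $\peak$ is strictly below $\turnsegment$ this move is available, giving the contradiction.

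If instead $S$ is obstructed, the geodesic bends over the tip of $\peak$, so $|S_T|=|p_0\,\peak|+d(\peak,r)$, where $d(\peak,r)$ is the perpendicular distance from the tip of $\peak$ to $r$. Here I keep $\peak$, and hence $\turningpoint$ and $\tau^*(r)$, fixed and rotate $r$ about $\turningpoint$. Because the tip of $\peak$ lies on the slope-$s$ line through $\turningpoint$, one may write $d(\peak,r)=|a|\,(s-s_r)/\sqrt{1+s_r^2}$, where $a$ is the horizontal offset of the tip from $\turningpoint$; rotating $r$ so as to decrease this quantity strictly decreases $|S_T|$ and hence strictly increases $c^*(r)$, and carrying the rotation to its limit makes $r$ vertical.

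The main obstacle is precisely this obstructed case. The derivative of $(s-s_r)/\sqrt{1+s_r^2}$ in $s_r$ vanishes at $s_r=-1/s$ and changes sign there, so rotating towards vertical decreases $|S_T|$ only when $s_r<-1/s$; in the residual range $-1/s<s_r\le -s$ that rotation would instead increase $d(\peak,r)$. The careful part is to dispatch this sub-range by a different admissible move — most naturally raising $\peak$ — bounding its effect on $|S_T|=|p_0\,\peak|+d(\peak,r)$ against the line lower bound $9\sqrt{1+s^2}$ of Lemma~\ref{lem:lowerboundS} through Lemma~\ref{lem:growth-bounds}, exactly as in the proofs of Lemmata~\ref{lem:strip-ratio} and~\ref{lem:1D-move-peak}. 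It is here that the hypotheses $\tfrac{2}{9}<s<\tfrac{4}{9}$ are used, keeping the relevant ratio below the line bound so that no worst case can have $\peak$ strictly below $\turnsegment$ while $r$ is non-vertical.
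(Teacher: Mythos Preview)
Your proposal has a genuine gap and one incorrect premise that undermines both cases.

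\textbf{The incorrect premise.} You assert in the plan that ``$\tau^*(r)=y\sqrt{1+s^2}/s$ depends only on the height $y$ of $\turningpoint$.'' This is false: by definition $\tau^*(r)$ depends on the height of $p_\goal$, the point where $\strategypath$ first crosses $r$. The ray $r$ just misses $\turningpoint$, but then $\strategypath$ follows $\turnsegment$ leftward and the next right segment $h^i_r$ before meeting $r$ again; the crossing point $p_\goal$ lies on $h^i_r$, well above $\turningpoint$. In your unobstructed case you claim raising $\peak$ increases $\tau^*(r)$ because $y_{\turningpoint}$ goes up, but a direct computation (intersect the shifted $r$ with $h^i_r$) shows $y_{p_\goal}$ actually \emph{decreases}. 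A salvage via the ratio test of Lemma~\ref{lem:growth-bounds} may be possible, but the argument as written is wrong. The same error reappears in your obstructed case, where you say rotating $r$ keeps $\tau^*(r)$ fixed; it does not, since $p_\goal$ moves along $h^i_r$.

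\textbf{The missing idea.} You correctly identify that rotating $r$ toward vertical only decreases $d(\peak,r)$ when $s_r<-1/s$, and that the sub-range $-1/s<s_r\le -s$ needs a different move; but you only gesture at ``raising $\peak$'' and a growth-bounds comparison without carrying it out. The paper's proof uses a different and cleaner perturbation here: it slides $\peak$ \emph{along the slope-$s$ segment of $\strategypath$ toward $\turningpoint$}, rather than raising it vertically. This keeps $\turningpoint$, $r$, and hence $\tau^*(r)$ genuinely fixed, and a short triangle-inequality argument on the geodesic $S_T$ shows $|S_T|$ strictly decreases. The paper's case split is phrased geometrically---whether the foot of the perpendicular from $\peak$ to $r$ lies above or below $\turningpoint$---which is exactly your $s_r=-1/s$ threshold, but the ``above'' case is dispatched by this slide-along-$P$ move with no derivative bookkeeping and no appeal to the $\tfrac{2}{9}<s<\tfrac{4}{9}$ hypotheses (those enter only in Lemma~\ref{lem:1D-move-peak}, not here). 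That geometric move is the idea your proof is missing.
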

\begin{proof}
    Assume for contradiction that neither of the two properties holds. For now assume that $S_T$ is routed over~$\peak$ and let~$S_\peak$ be the line perpendicular to~$r$ through~$\peak$. We make a case distinction on whether $S_\peak$ intersects~$r$ above~$\turningpoint$ or not. We first consider the case where~$S_\peak$ hits~$r$ above~$\turningpoint$ (see Figure~\ref{fig:1D-move-peak-P}). Let~$p_\text{opt}$ be the point where~$S_T$ hits~$r$, and let~$p_\peak$ be the vertex before~$p_\text{opt}$ on the geodesic~$S_T$, coinciding with~$\peak$. Let~$q$ be the vertex on~$S_T$ before~$p_\peak$ (possibly~$q=p_0$, as in Figure~\ref{fig:1D-move-peak-P}). Consider the line segment~$qp_\text{opt}$. Because~$\peak$ does not lie on~$\turnsegment$, $qp_\text{opt}$ intersects~$P$ between~$\peak$ and~$\turningpoint$. Let~$p'_\peak$ be the intersection point, and let~$p'_\text{opt}$ be the point on~$r$ hit by the perpendicular on~$r$ through~$p'_\peak$. Finally, let~$S_T(q)$ be the geodesic~$S_T$ from~$p_0$ to~$q$. By the above,
    \begin{equation*}
        |S_T| > |S_T(q)| + |qp_\text{opt}| > |S_T(q)| + |qp'_\peak| + |p'_\peak p'_\text{opt}|.
    \end{equation*}
    
    Consider the terrain~$T'$ where, compared to~$T$, $\peak$ moved rightward along~$P$ until it coincides with~$p'_\peak$ (see Figure~\ref{fig:1D-move-peak-P}). For $T'$ we know that $|S_{T'}| = |S_T(q)| + |p'_\peak q| + |p'_\text{opt}p'_\peak| < |S_T|$. Additionally, $\tau^*(r)$ is unaffected.
    Thus, the ratio~$c^*(r)$ strictly increases, contradicting that $(T,r)$ is a worst-case instance.

    \begin{figure}[t]
    \begin{minipage}{.47\textwidth}
    \centering
    \includegraphics[page=2]{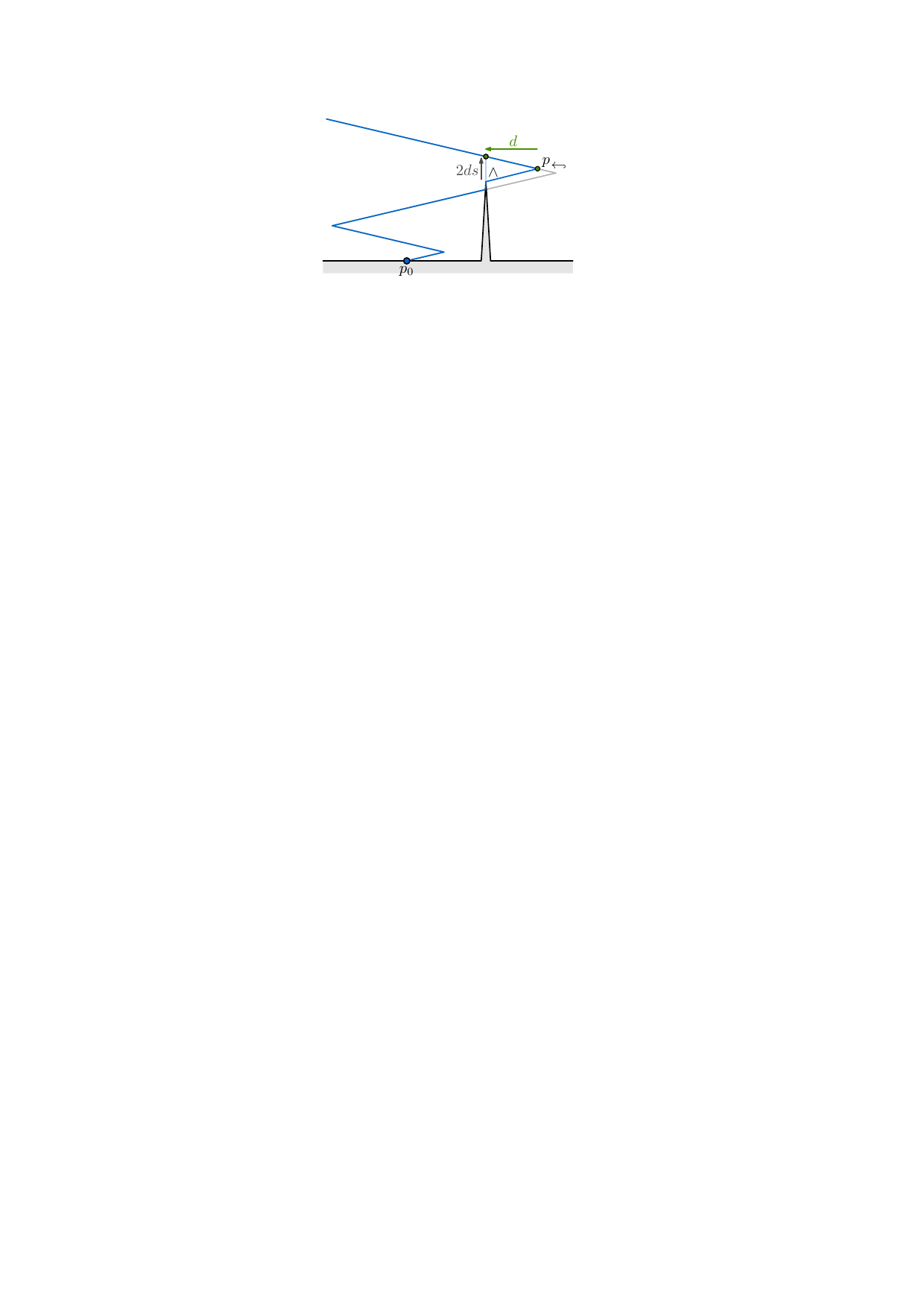}
    \caption{Moving $\peak$ upwards causes $\turningpoint$ to move left. If $\turningpoint$ moves a horizontal distance~$d$, then~$\peak$ must have moved~$2ds$. }
    \label{fig:turning-point-move}
    \end{minipage}
    \begin{minipage}{.47\textwidth}
    \vspace{5pt}
    \centering
        \includegraphics[page=2]{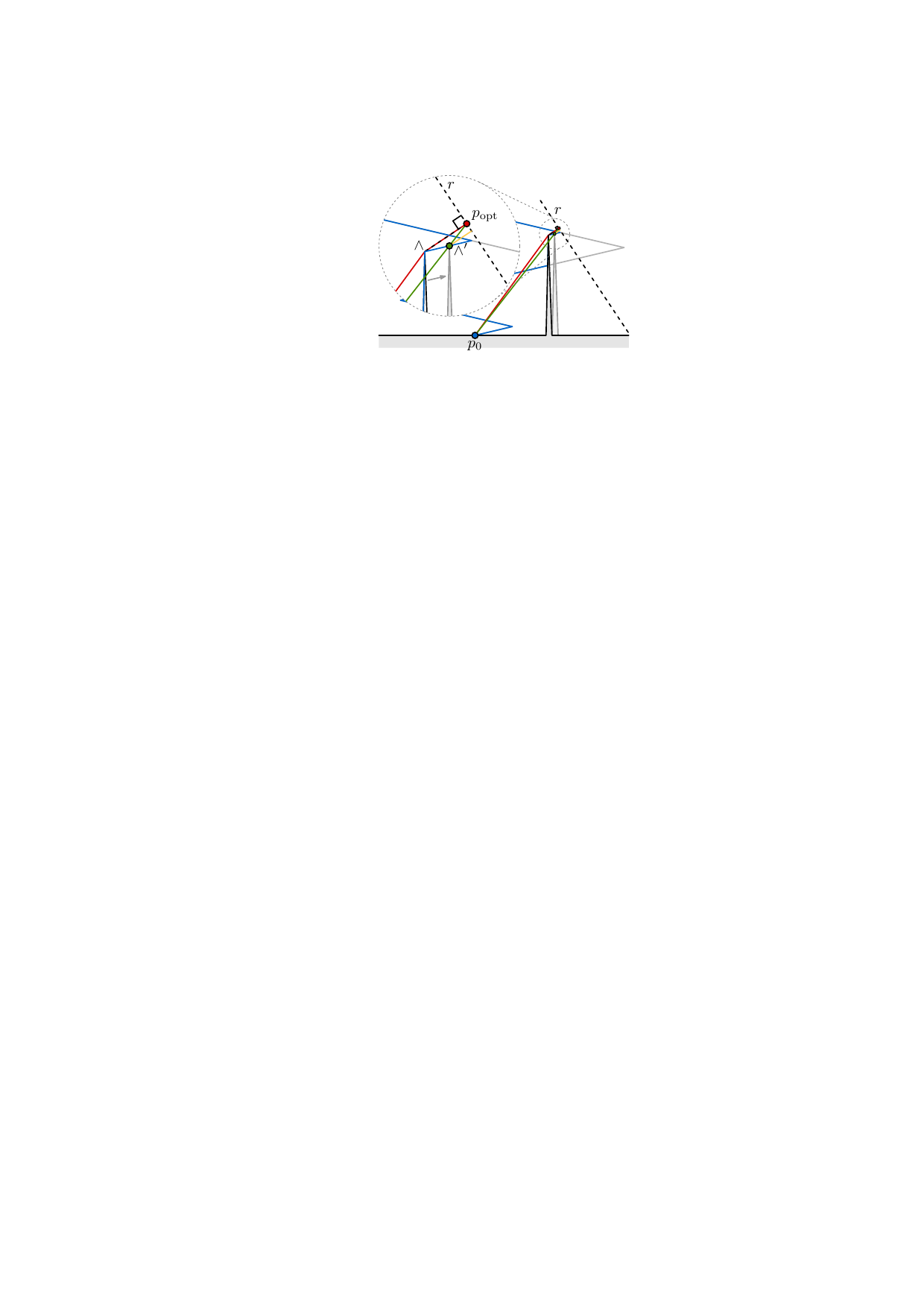}
        \caption{By moving local maximum~$\peak$ to the green point~$\peak'$, $|S_T|$ (red) strictly decreases, concatenating the green path from~$p_0=q$ to $\peak'$ and the yellow path.}
        \label{fig:1D-move-peak-P}
    \end{minipage}
    \hspace{.05\textwidth}
\end{figure}

    Notice that, when $\peak$ is not part of $S_T$, then $S_T$ hits~$r$ above~$\turningpoint$.
    In this case, the above modification to~$T'$ does not affect $\tau^*(r)$ and $|S_{T'}|=|S_T|$. Now Lemma~\ref{lem:vertical-r} applies, contradicting that~$(T,r)$ is a worst-case instance.

    Finally, consider the case where~$S_\peak$ hits~$r$ below or on~$\turningpoint$. When we rotate~$r$ around~$\turningpoint$ to become more vertical, $S_T$ decreases and $\tau^*(r)$ increases. This results in a strictly higher ratio~$c^*(r)$, contradicting that~$(T,r)$ is a worst case.
\end{proof}

\mypar{Unobstructed search path.}
Next we consider all steep visibility rays in the case that $\turningpoint$ is a turning point of $\projectedpath$, and show the following.

\begin{lemma}\label{lem:1D-unobstructed}
    In a worst-case instance~$(T, r(s_r,d_r))$, where~$\turningpoint$ is a turning point of $\projectedpath$, if $\frac{1}{9} < s \leq 1$, then~$r$ is vertical.
\end{lemma}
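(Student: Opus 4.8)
The plan is to argue by a rotation of the ray about the turning point, exactly as in the obstructed case (Lemma~\ref{lem:vertical-r}), but now exploiting that $\turningpoint$ is a \emph{genuine} turning point of $\projectedpath$. By Lemmas~\ref{lem:1D-move-r} and~\ref{lem:flat-rays} the ray $r$ is steep and infinitesimally close to $\turningpoint$, with $p_\goal$ on the adjoining right \subpath, so in the limit $p_\goal$ coincides with $\turningpoint$. Since $\tau^*(r)=y\sqrt{1+s^2}/s$ depends only on the height $y$ of $p_\goal$, and rotating $r$ about the fixed point $\turningpoint$ does not change this height, $\tau^*(r)$ is invariant under such a rotation. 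Hence $c^*(r)=\tau^*(r)/|S_T|$ increases precisely when $|S_T|$ decreases, and it suffices to show that, among all steep slopes $s_r\le -s$ through $\turningpoint$, the geodesic distance $|S_T|$ from $p_0$ to $r$ is smallest when $r$ is vertical.

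First I would reduce $|S_T|$ to a perpendicular distance. Because $\turningpoint$ is a turning point of $\projectedpath$, the search path reached it along $\projectedpath$ without the terrain rising above it, so in a worst case the terrain may be taken minimal away from the feature that casts $r$; the last segment of $S_T$ is perpendicular to $r$, and when this perpendicular is unobstructed we have $|S_T|=d_r$. Writing $\turningpoint=(x_0,y_0)$, the distance from $p_0$ to the line supporting $r$ is $d_r(s_r)=|y_0-s_r x_0|/\sqrt{1+s_r^2}$. I would then treat $d_r$ as a function of $s_r$ (equivalently, of the rotation angle) and, via Lemma~\ref{lem:growth-bounds}, show that over the admissible steep range this function is decreasing as $s_r\to-\infty$, so that its minimum---and thus the maximum of $c^*$---is attained at the vertical ray.

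The hard part will be the geodesic itself, specifically the location of the perpendicular foot relative to $\turningpoint$. For the vertical and the steeper rays the foot falls on the side of $\turningpoint$ toward the target, the perpendicular is unobstructed, and the computation above applies directly. For the flatter steep rays (slope close to $-s$) the foot can land beyond $\turningpoint$, and there I would split into two subcases: either the foot is still reachable, so $|S_T|=d_r$ and the same monotonicity holds, or the terrain feature grazed by $r$ (the rim or peak $\peak$ that makes $r$ a visibility ray) blocks the direct path, forcing $|S_T|>d_r$ and hence a strictly smaller ratio than $d_r$ alone would predict. In either subcase the instance is dominated by the vertical one. As in the earlier lemmas, the lower bound $c^*(r)\ge 9\sqrt{1+s^2}$ of Lemma~\ref{lem:lowerboundS}, together with the hypothesis $s>\tfrac19$, is what lets me discard the competing candidate extrema and conclude that a worst-case instance has $r$ vertical.
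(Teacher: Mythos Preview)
Your proposal rests on a mistaken picture of where $p_\goal$ sits. The turning point $\turningpoint$ is the apex of $h^{i-2}_r$ (equivalently the right end of $h^{i-1}_\ell$), but $p_\goal$ lies on the \emph{next} right segment $h^i_r$, far above $\turningpoint$. Hence ``in the limit $p_\goal$ coincides with $\turningpoint$'' is false, and $\tau^*(r)$ is \emph{not} invariant under rotating $r$ about $\turningpoint$: as $s_r$ moves from $-s$ towards $-\infty$, the intersection of $r$ with $h^i_r$ slides from the left end of $h^i_r$ up to $x=x_0$, so the height of $p_\goal$---and thus $\tau^*(r)$---strictly \emph{increases}. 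This actually helps the conclusion, but it invalidates your reduction to ``minimize $|S_T|$''. Worse, the perpendicular distance $d_r(s_r)=(y_0-s_r x_0)/\sqrt{1+s_r^2}$ is \emph{not} minimized at the vertical ray: with $x_0=2^{i-3}$ and $y_0=3s\cdot 2^{i-3}$ one computes $d_r(-\infty)=x_0$ while $d_r(-s)=4s\,x_0/\sqrt{1+s^2}<x_0$ for the relevant values of $s$, so your monotonicity claim fails over the admissible steep range.

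The paper's proof avoids all of this with a single clean case split on where $S_T$ meets $r$ relative to $\turningpoint$. If $S_T$ hits $r$ \emph{below} $\turningpoint$, then rotating $r$ about $\turningpoint$ toward vertical moves the lower half of $r$ toward $p_0$, so $|S_T|$ decreases, while (as noted above) $\tau^*(r)$ increases; both effects push $c^*(r)$ up, contradicting worst-case. If $S_T$ hits $r$ \emph{above} $\turningpoint$, then inserting an artificial peak at $\turningpoint$ does not interfere with $S_T$, and the instance becomes one covered by Lemma~\ref{lem:vertical-r}. Your proposal never isolates this second case, and the attempted computational route does not go through without it.
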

\begin{proof}
    Assume for contradiction that $r$ is not vertical. 
    If $S_T$ intersects~$r$ below~$\turningpoint$, rotating $r$ around $\turningpoint$ to become more vertical results in a strictly higher value $c^*(r)$, as $|S_T|$ becomes smaller and $\tau^*(r)$ becomes larger, contradicting that~$(T,r)$ is a worst case. 
    If $S_T$ hits~$r$ above~\turningpoint, then this case is equivalent to having a peak~$\peak$ at exactly $\turningpoint$, because~$\peak$ does not interfere with $S_T$. By Lemma~\ref{lem:vertical-r}, $r$ is then vertical in the worst-case.
\end{proof}

\mypar{Bounding the competitive ratio.} To finish our analysis, we combine the previous lemmata, and choose~$s$ to minimize the competitive ratio across all cases. To obtain a strategy that is feasible in practice, we assume that $|S_T| \geq 1$. That is, we do not use infinitesimally small steps to start in practice. We then adapt our strategy by first moving upwards at most one, up to the final time that $\strategypath$ is intersected, and then start following along $\strategypath$. This only shortens the search path, so the competitive ratio holds for this adjusted path as well.

\begin{restatable}{theorem}{oneDcompratio}\label{thm:1D-comp-ratio}
    Our searching strategy for searching in a 1.5D terrain achieves a competitive ratio of $3\sqrt{19/2}$ for $s=\sqrt{2}/6$.
\end{restatable}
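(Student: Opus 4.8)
The plan is to use the four structural lemmas to collapse the search for a worst case to two concrete configurations, to write $c^*(r)$ as an explicit function of $s$ in each, and then to pick $s$ to minimise the larger of the two (recall $c(r)\le c^*(r)$, so bounding $\max c^*$ bounds the competitive ratio). By Lemma~\ref{lem:flat-rays} I may assume $r$ is steep and $p_\goal$ lies on a right \subpath; by Lemma~\ref{lem:1D-move-r} the ray is infinitesimally past a turning point $\turningpoint$ of $\strategypath$; and by Lemmas~\ref{lem:1D-unobstructed} and~\ref{lem:1D-peak-and-vertical} I may take $r$ vertical. This leaves exactly two cases, according to whether $\turningpoint$ is a turning point of $\projectedpath$ (unobstructed) or a peak $\peak$ sits on $\turnsegment$ at $\turningpoint$ (obstructed). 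Since all four lemmas assume $\tfrac29<s<\tfrac49$, I will check at the end that $s=\sqrt2/6\approx0.236$ lies in this interval.

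In the unobstructed case $\turningpoint$ is the upper end of a right segment, at horizontal distance $D$ and height $3sD$. With $r$ vertical just beyond $\turningpoint$, the path $\strategypath$ does not re-cross the vertical line through $\turningpoint$ until two segments later, so $p_\goal$ has height $9sD$ and $\tau^*(r)=9D\sqrt{1+s^2}$; the shortest geodesic is the unobstructed horizontal segment of length $|S_T|=D$, giving $c^*(r)=9\sqrt{1+s^2}$, which increases with $s$.

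In the obstructed case I normalise units so that the right segment carrying $p_\goal$ is $h^j_r$ with $2^{j}=2$. Writing the peak $\peak=\turningpoint$ at horizontal coordinate $x$ on $\turnsegment$, its height is $s(1-x)$, while $\strategypath$ again only crosses $r$ two segments higher, at height $s(2+x)$; hence $\tau^*(r)=(2+x)\sqrt{1+s^2}$, whereas the optimum reaches $r$ by the straight segment to the peak, so $|S_T|=\sqrt{x^2+s^2(1-x)^2}$. Maximising $c^*(r)=\frac{(2+x)\sqrt{1+s^2}}{\sqrt{x^2+s^2(1-x)^2}}$ over the peak position, the stationarity condition reduces (after clearing the quotient, in the spirit of Lemma~\ref{lem:growth-bounds}) to $-2x+3s^2(1-x)=0$, i.e. $x=\frac{3s^2}{2+3s^2}$; substituting this value collapses the expression to the closed form $c^*(r)=\frac{\sqrt{(1+s^2)(4+9s^2)}}{s}$, which decreases with $s$ on the relevant range.

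Both configurations can occur, so the competitive ratio is the maximum of the two bounds. As the unobstructed bound increases and the obstructed bound decreases in $s$, the maximum is minimised where they meet: $9\sqrt{1+s^2}=\frac{\sqrt{(1+s^2)(4+9s^2)}}{s}$ simplifies to $81s^2=4+9s^2$, i.e. $s^2=\tfrac1{18}$ and $s=\sqrt2/6$, at which point both equal $9\sqrt{19/18}=3\sqrt{19/2}$; this value of $s$ indeed lies in $(\tfrac29,\tfrac49)$. I expect the main obstacle to be the obstructed case: one must see that detection happens not at the peak but two segments above it (so $\tau^*$ is driven by $s(2+x)$ while $|S_T|$ is only the short segment to the peak top), and then carry out and simplify the maximisation over $x$. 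It is precisely the fact that this yields the clean form $\sqrt{(1+s^2)(4+9s^2)}/s$, crossing $9\sqrt{1+s^2}$ exactly at $s=\sqrt2/6$, that makes the two worst cases balance and pins down the optimal slope.
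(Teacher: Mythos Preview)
Your proposal is correct and follows the same approach as the paper: reduce via the structural lemmas to the unobstructed and obstructed vertical-ray configurations, obtain $9\sqrt{1+s^2}$ and (after maximising over the peak position) a bound that decreases in $s$, and balance the two at $s=\sqrt{2}/6$. Your normalisation differs by a factor of four from the paper's (the paper takes $i=3$, so its $d_r$ equals $4x$ in your variables, with optimum $d_r=4/13$), and you go one step further by collapsing the obstructed case to the closed form $\sqrt{(1+s^2)(4+9s^2)}/s$ before equating, whereas the paper solves for $s$ and $d_r$ simultaneously; both routes lead to the same computation.
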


\begin{proof}
    We consider all visibility rays~$r(s_r,d_r)$, to find a combination of~$s_r$ and~$d_r$ that maximizes $c^*(r)$. By Lemma~\ref{lem:flat-rays} we know that any visibility ray with $s_r > -s$  or where $p_t$ lies on a left \subpath of~$\strategypath$ results in a $c^*(r)$ below the lower bound of Lemma~\ref{lem:lowerboundS}. We thus consider the case where $s_r \leq -s$ and $p_t$ is on a right \subpath, both when $\projectedpath$ is intersected by~$T$ and when $\strategypath$ is unobstructed. For both of these cases, we derive a bound on $c^*(r)$ dependent on~$s$, and then choose a value of $s$ that minimizes the largest bound. 

    \mypar{Case 1: obstructed search path.} We know by Lemma~\ref{lem:1D-peak-and-vertical} that in the worst case the visibility ray~$r(s_r,d_r)$ lies just behind a local maximum~$\peak$ of~$T$, and intersects the $i$-th right \subpath of~$P$. Observe that this means that $0\leq d_r \leq 2^{i-3}$. Additionally, $\peak$ touches the ($i-1$)-th left segment of~$\projectedpath$, and $s_r=-\infty$. We can bound the competitive ratio as follows (see Figure~\ref{fig:peak-ratio}). 

    \begin{figure}
        \centering
        \includegraphics{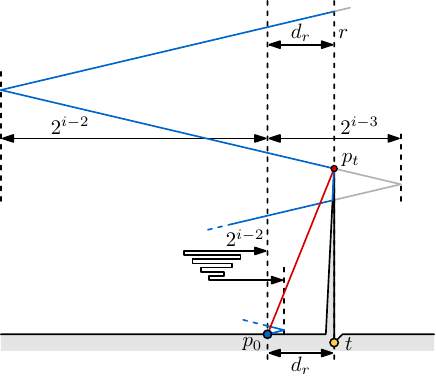}
        \caption{The construction for bounding the competitive ratio when terrain~$T$ intersects~$\projectedpath$ (in grey). The search path~$\strategypath$ (in blue) crosses visibility ray~$r$ on right segment~$h^i_r$ to enter~$\vis{\goal}$. Arrows indicate horizontal distances along~$\projectedpath$, and $S_T$ is shown in red.}
        \label{fig:peak-ratio}
    \end{figure}
    
    As~$r$ is vertical, the distance~$d_r$ coincides with the horizontal distance between~$p_0$ and~$\peak$. 
    To determine $\tau^*(r)$, consider the horizontal distance traveled along~$\projectedpath$, which is 
    \begin{equation*}
        d_r + 2\cdot 2^{i-2} + 2\cdot 2^{i-3} + \sum_{j\geq 4} 2\cdot 2^{i-j} = d_r + 8\cdot 2^{i-3}.
    \end{equation*}
    We then get~$\tau^*(r) \leq (d_r + 8\cdot 2^{i-3})\cdot \sqrt{1+s^2}$. Next, we find a lower bound on~$|S_T|$ by computing the distance between~$p_0$ and~$\peak$. For that we use the height of~$\peak$ which is
    \begin{equation*}
        h^{i-1}_\ell(0) - s\cdot d_r = s\cdot(2^{i-1}) - s\cdot d_r = s\cdot(4\cdot 2^{i-3} - d_r).
    \end{equation*}
    It follows that $|S_T| \geq \sqrt{d_r^2 +  s^2\cdot(4\cdot 2^{i-3} - d_r)^2}$. Observe that the construction of this worst case is equivalent for any odd~$i$, as both $\tau^*(r)$ and $|S_T|$ (and the upper bound on $d_r$) differ by exactly a factor $4$ between consecutive odd values of~$i$. Hence, we may choose~$i=3$, 
    such that $0\leq d_r \leq 1$ and
    \begin{align*}
        \tau^*(r) &\leq (d_r + 8)\cdot \sqrt{1+s^2}\\
        |S_T| &\geq \sqrt{d_r^2 +  s^2\cdot(4 - d_r)^2}
    \end{align*}

    Thus, when~$T$ intersects~$\projectedpath$, then $(8+d_r) \sqrt{1+s^2} / \sqrt{d_r^2 + s^2 (4 - d_r)^2}$ is an upper bound on~$c^*(r)$. Furthermore, Lemmata~\ref{lem:flat-rays} and~\ref{lem:1D-peak-and-vertical} require that $\frac{2}{9} \leq s < \frac{4}{9}$.

    \mypar{Case 2: unobstructed search path.} Lemma~\ref{lem:1D-unobstructed} tells us that in the worst case, the visibility ray~$r(s_r,d_r)$ is vertical and infinitesimally close to a turning point. Thus, $s_r = -\infty$ and, since $h^i_r$ is the lowest segment crossed by~$r$, we get that $|S_T| = d_r = 2^{i-3}$. Similar to the previous case, we have that~$\tau^*(r) = (d_r + 8\cdot 2^{i-3})\cdot \sqrt{1+s^2} = (9\cdot 2^{i-3})\cdot \sqrt{1+s^2}$. We again choose~$i=3$, resulting in $d_r = 1$ and $\tau^*(r)\leq 9\sqrt{1+ s^2}$. Thus, in this case the competitive ratio is at most $9 \sqrt{1+ s^2}$ for $\frac{1}{9} < s \leq 1$.

    To finalize the analysis, we choose~$s$ such that the value $c^*(r)$ of the case with largest ratio is minimized. We observe that $(8+d_r) \sqrt{1+s^2} / \sqrt{d_r^2 + s^2 (4 - d_r)^2}$ is decreasing in~$s$ (when $0\leq d_r\leq 1$) and $9\cdot \sqrt{1+ s^2}$ is increasing in~$s$. We can hence equate the formulas for the two cases and set $(8+d_r) \sqrt{1+s^2} / \sqrt{d_r^2 + s^2 (4 - d_r)^2} = 9 \sqrt{1+ s^2}$ to find a value for~$s$ that minimizes the competitive ratio. This equality is satisfied for $s=\sqrt{2}/6$ and $d_r=4/13$. This choice of~$s$ satisfies all of the bounds on~$s$ and for this~$s=\sqrt{2}/6$ this choice of~$d_r$ maximizes~$(8+d_r) \sqrt{1+s^2} / \sqrt{d_r^2 + s^2 (4 - d_r)^2}$ --- a worst-case ray is~$r(-\infty,4/13)$. As $c^*(r)$ is an upper bounded on the competitive ratio $c(r)$ of our strategy, our strategy has competitive ratio of at most $3\sqrt{19/2}$ using $s = \sqrt{2}/6$.
\end{proof}

\section{Competitive searching on 2.5D terrains}\label{sec:2.5D}
In this section we study the searching problem in an environment that is defined by a $2.5$D terrain, which is represented by a function $T_2$. It is easy to see that, without putting additional restrictions on the terrain, achieving a bounded competitive ratio will be impossible: consider a flat terrain with arbitrarily many small pits in the terrain that are arbitrarily steep. Any searching strategy would have to move to the location of each pit in the $xy$-plane in order to look at the bottom of the pit. As we can place arbitrarily many pits within a small bounded distance from the starting point, and the target may be in any of the pits, the competitive ratio of any searching strategy would always be unbounded. We make this argument more concrete in the lower bound construction below. To restrict the set of $2.5$D terrains under consideration, we require that the maximum slope of the terrain, which corresponds to the \emph{Lipschitz constant} $\lambda$ of $T_2$, is bounded. A strategy of moving upwards from $p_0$ results in a competitive ratio of $O(\lambda)$. In the remainder of this section we show that we can achieve a competitive ratio of $O(\sqrt{\lambda})$, which matching the lower bound for $2.5$D terrains.      

\mypar{Lower bound.}
We first show a lower bound on the competitive ratio for any searching strategy on $2.5$D terrains. Since this lower bound is a function of $\lambda$, this directly implies that the competitive ratio is unbounded if we do not limit the maximum slope of the terrain. 

\begin{figure}[b]
    \centering
    \includegraphics{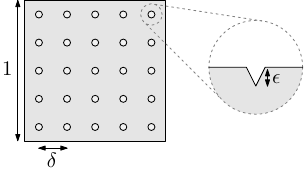}
    \caption{The terrain construction for the lower bound on~$2.5$D terrains.}
    \label{fig:2D-LB}
\end{figure}

\begin{theorem}
    The competitive ratio for searching on $2.5$D terrains with maximum slope $\lambda$ is at least $\Omega(\sqrt{\lambda})$.
\end{theorem}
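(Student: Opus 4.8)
The plan is to turn the informal argument preceding the statement into a concrete construction, combined with an area-covering bound on how much of the terrain a bounded-length flight can see. First I would fill a horizontal disk of radius $R$ centered at the $xy$-projection of $p_0$ with a dense grid (spacing $\delta \to 0$) of narrow conical pits, each with wall slope exactly $\lambda$ so that the terrain is $\lambda$-Lipschitz, and with its bottom on the surface. The key local fact is that the bottom of such a pit is a single surface point whose visibility region is the upward vertical cone of half-angle $\arctan(1/\lambda)$: at height $z$ above the terrain this cone has horizontal radius at most $\rho + z/\lambda$, where $\rho = O(\delta)$ is the rim radius. Hence a searcher at height $z$ sees only those pits whose axes lie within horizontal distance $\rho + z/\lambda$ of its $xy$-position. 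I would verify that the flat regions and the downward-going neighboring pits never occlude this upward line of sight, so that this cone is exactly $\vis{\cdot}$ for each pit bottom.

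Next comes the cost side. For any pit at horizontal distance $r \le R$ from $p_0$, the searcher can move horizontally at terrain level to the point directly above the pit axis and look straight down, which already enters its visibility region; hence $\opt \le r + O(\delta) \le R + O(\delta)$ for every target in the construction. To lower bound the searcher, I would bound the footprint of a flight: along a search path of total length $\Lambda$, with maximum height $z_{\max}$ and horizontal length $L_h$ (both at most $\Lambda$), the union of the coverage disks of radius at most $\rho + z/\lambda$ has horizontal area at most $2(\rho + z_{\max}/\lambda)L_h + \pi(\rho + z_{\max}/\lambda)^2 = O(\Lambda^2/\lambda)$ as $\delta \to 0$. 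Since every pit must be seen and the pits fill the disk, the covered area must be at least $\pi R^2$ up to the $O(\delta)$ discretization error, which forces $\Lambda = \Omega(R\sqrt{\lambda})$.

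Finally I would assemble these into the competitive-ratio bound, using that the target may be any pit. I take the adversary's target to be the pit that the given strategy sees \emph{last}. At the moment this pit is first seen, the strategy has already seen all pits, so by the footprint bound the traversed prefix has length $\Omega(R\sqrt\lambda)$, while this pit has $\opt \le R + O(\delta)$. Letting $\delta \to 0$, the competitive ratio is therefore at least $\Omega(R\sqrt\lambda)/R = \Omega(\sqrt\lambda)$, independent of $R$, which proves the theorem.

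I expect the main obstacle to be making the area-covering bound fully rigorous: carefully bounding the area of a union of disks of \emph{varying} radius swept along a $3$D curve (rather than using a crude product bound), and controlling the discretization so that ``covering the dense pit grid'' converges cleanly to ``covering area $\pi R^2$'' as $\delta \to 0$, while keeping each pit's visibility cone and the $\opt$ estimate valid in the limit. A secondary point to confirm is that downward-going pits never block one another's upward lines of sight, so that each visibility region really is the full vertical cone used in both the coverage bound and the $\opt$ bound.
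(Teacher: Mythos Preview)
Your proposal is correct and reaches the same $\Omega(\sqrt{\lambda})$ bound, but by a genuinely different mechanism than the paper.

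The paper works with a \emph{finite} $k\times k$ grid of conical pits at spacing $\delta=1/k$ and depth $\epsilon$, fixes the concrete parameters $k=\sqrt{\lambda}/2$ and $\epsilon=\sqrt{\lambda}/4$, and makes a single case split on whether the search path ever exceeds height $\sqrt{\lambda}/4$. If it does, the vertical travel alone already gives $\Omega(\sqrt{\lambda})$ against $\opt<\sqrt{2}$. If it stays below that height, then the visibility disk around each pit has radius at most $(z+\epsilon)/\lambda\le 1/(2\sqrt{\lambda})$, so moving from one pit's visibility region to any other costs at least $\delta-1/\sqrt{\lambda}$; summing this over the $k^2$ pits gives total travel $k-k^2/\sqrt{\lambda}=\sqrt{\lambda}/4$. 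The argument is essentially one-dimensional (it sums inter-pit gaps along the order in which pits are visited) and entirely elementary, with explicit constants throughout.

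Your route replaces the height threshold by a two-dimensional covering estimate: the horizontal footprint swept by a flight of length $\Lambda$ is contained in the $(\rho+z_{\max}/\lambda)$-tube of its projection and therefore has area $O(\Lambda^2/\lambda)$, while the pit centers $\delta$-fill a disk of area $\pi R^2$, forcing $\Lambda=\Omega(R\sqrt{\lambda})$. This is conceptually cleaner---no case split, and the tension between going high (enlarging each coverage disk) and moving horizontally (sliding it around) is captured in a single inequality---at the price of the discretization step you already flag: ``containing all grid points'' is not literally ``having area $\pi R^2$'', and you will need a short packing argument (the disjoint $\delta/2$-disks around the pit centers lie in the $(\rho+z_{\max}/\lambda+\delta/2)$-tube, so its area is at least $\pi^2R^2/4-o(1)$) to make it rigorous. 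Once that is written out your proof is complete; the paper sidesteps this entirely by using a coarse grid and a linear count.
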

\begin{proof} 
Consider a flat terrain containing a regular grid of $k \times k$ pits formed by a cone of maximum slope $\lambda$, where $\delta = 1/k$ is the distance between the centers of two adjacent pits, and $\epsilon > 0$ is the depth of each pit, where $\epsilon$ and $k$ will be chosen later (see Figure~\ref{fig:2D-LB}). 
For convenience we assume that the starting point of the searching problem is exactly a distance~$\delta$ to the left from the lower-left pit in the grid at height $0$. Now consider a searching strategy for this terrain, represented by a path $P$. 

First assume that the maximum height $z$ that $P$ reaches before being able to see the bottom of the last pit satisfies $z > \frac{\sqrt{\lambda}}{4}$. Then $P$ must travel at least a distance $z > \frac{\sqrt{\lambda}}{4}$ before seeing the last pit. The minimum travel distance to see this pit is less than $\sqrt{2}$. Hence, the competitive ratio is at least $\frac{\sqrt{\lambda}}{4\sqrt{2}} = \Omega(\sqrt{\lambda})$.

Now assume that $P$ stays under the height of $z = \frac{\sqrt{\lambda}}{4}$. By construction of the pits, this implies that the searcher must be within a horizontal distance of $(z + \epsilon)/\lambda$ from the center of the pit to see the bottom of the pit (this is the radius of the cone of a pit when extended to height $z$). As such, after checking one pit, the searcher must travel at least a distance $\delta - 2 (z + \epsilon)/\lambda$, which is the distance between two cones at height $z$, before being able to check another pit. If we choose $\epsilon = \frac{\sqrt{\lambda}}{4}$, then this distance is at least $\delta - \frac{1}{\sqrt{\lambda}}$. The total (horizontal) distance traveled by $P$ before seeing the last pit is then at least $k^2 (\delta - \frac{1}{\sqrt{\lambda}}) = k - \frac{k^2}{\sqrt{\lambda}}$, as there are $k^2$ pits in total. By choosing $k =  \frac{\sqrt{\lambda}}{2}$ this total distance is at least $\frac{\sqrt{\lambda}}{2} - \frac{\sqrt{\lambda}}{4} = \frac{\sqrt{\lambda}}{4}$. Since the minimum travel distance to see the last pit is again less than $\sqrt{2}$, the competitive ratio of $P$ is at least $\frac{\sqrt{\lambda}}{4 \sqrt{2}} = \Omega(\sqrt{\lambda})$. \qed  
\end{proof}

\mypar{Searching strategy.}
We now present a searching strategy for $2.5$D terrains with a maximum slope $\lambda$. The aim is to match the dependency on $\lambda$ that is shown in the lower bound. We will use the prior known value of $\lambda$ to determine our search path $P_\lambda$. To simplify the analysis, our searching strategy consists of separate vertical and horizontal movement phases, explained in detail below.

In the description of our search strategy, we again make use of arbitrarily small steps at the start to simplify the analysis. When a minimum value on the length of the optimal search path is given, all bounds still hold when we simply move upwards up to this value and then continuing on the described search path. Overall, our searching path $P_\lambda$ works as follows: first, we move vertically up by a distance $\varepsilon \sqrt{\lambda}$, for some arbitrarily small value $\varepsilon > 0$. Next, we construct a square horizontal grid $G_\varepsilon$ with total length $2 \varepsilon$ centered (horizontally) around the starting point. This grid will consist of $(2 k + 1) \times (2 k + 1)$ grid cells, where $k$ is chosen large enough such that the side length of a single grid cell is at most $\frac{\varepsilon}{2 \sqrt{2 \lambda}}$. Specifically, let $k$ be the smallest integer such that $2 k + 1 \geq 4 \sqrt{2 \lambda}$. We perform a horizontal search through this grid, described in detail below, and return to the center of the grid. We then move vertically up again until we are at a height that is $2 \varepsilon \sqrt{\lambda}$ above the previous grid. Here we perform a horizontal search on a grid $G_{2\varepsilon}$ with total length $4 \varepsilon$, but where the number of grid cells is still $(2 k + 1) \times (2 k + 1)$. We then repeat this process, each time doubling the vertical distance between grids and doubling the total length of the grid, but keeping the number of grid cells the same (see Figure~\ref{fig:2D-grid}). Note that a grid $G_x$ for some $x \geq \varepsilon$ is at height $(2 x - \varepsilon) \sqrt{\lambda}$ by construction. Since we assume that $\varepsilon$ is arbitrarily small, we will simply say that $G_x$ is at height $2 x \sqrt{\lambda}$. 

To perform a horizontal search in a grid $G_x$ for some $x > 0$, we first consider the height of the terrain within the grid cells. We say a grid cell $\sigma$ is \emph{eligible} if at least one point inside $\sigma$ has a height at most the height of $G_x$ (which is $2 x \sqrt{\lambda}$). We consider the connected set of eligible cells $\connectedcomponent$ that includes cell $\sigma_0$ containing the starting point (note that $\sigma_0$ is always eligible), where two eligible cells are connected if they share a side. To perform the horizontal search in $G_x$ we construct a tour that starts in the center of $\sigma_0$, visits all the centers of cells in $\connectedcomponent$, is completely contained within the cells of $\connectedcomponent$, and eventually returns to the center of $\sigma_0$ (see Figure~\ref{fig:connected-component}). During this horizontal search, the terrain may force the searcher to increase the height, which is allowed. However, the searcher never moves back down, and hence the height will never decrease anywhere on $P_\lambda$.   

\begin{figure}[t]
    \begin{minipage}{.45\textwidth}
    \centering
    \includegraphics[page=1]{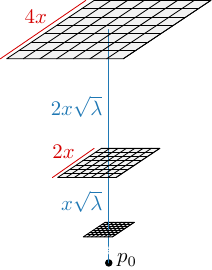}
    \caption{Three iterations with grids constructed with $2k+1$ by $2k+1$ cells.}
    \label{fig:2D-grid}
    \end{minipage}
    \hspace{.05\textwidth}
    \begin{minipage}{.45\textwidth}
    \centering
    \vspace{16pt}
    \includegraphics[page=2]{2D-Grid.pdf}
    \caption{Connected component $\connectedcomponent$ in red. Gray area is where $T$ is above the grid.}
    \label{fig:connected-component}
    \end{minipage}
\end{figure}

\mypar{Analysis.}
We first establish useful properties on the horizontal searches in grids. 
\newpage
\begin{restatable}{lemma}{gridsize}\label{lem:grid}
    Let $G_x$ be a horizontal grid used in $P_\lambda$ for some $x > 0$.
    \begin{enumerate}
        \item[\emph{(1)}] The number of grid cells in $G_x$ is $O(\lambda)$. \smallskip \\
        During a horizontal search in $G_x$:
        \item[\emph{(2)}] The amount of horizontal movement is at most $O(x \sqrt{\lambda})$.
        \item[\emph{(3)}] The amount of vertical movement is at most $\frac{x \sqrt{\lambda}}{2}$.
    \end{enumerate}
\end{restatable}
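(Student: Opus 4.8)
The plan is to prove the three parts in the order stated, since parts (2) and (3) both rely on the cell count and cell size established in part (1). For part (1) I would note that the grid has exactly $(2k+1)^2$ cells, and because $k$ is the smallest integer with $2k+1 \geq 4\sqrt{2\lambda}$, we also have $2k+1 < 4\sqrt{2\lambda}+2$; hence $2k+1 = \Theta(\sqrt{\lambda})$ and $(2k+1)^2 = O(\lambda)$. I would record here the exact side length of a cell, namely $\frac{2x}{2k+1}$ (as $G_x$ has total side length $2x$), together with the bound $\frac{2x}{2k+1} \leq \frac{x}{2\sqrt{2\lambda}}$, which is reused in both remaining parts.

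For part (2) I would model $\connectedcomponent$ as a graph whose vertices are its cells and whose edges join cells sharing a side, and take any spanning tree of this graph. A depth-first traversal of the tree that starts and ends at $\sigma_0$ visits every cell center and traverses each tree edge exactly twice; each such traversal is a horizontal move between the centers of two side-adjacent cells, of length equal to one cell side. Since the number of tree edges is at most $|\connectedcomponent|-1 < (2k+1)^2$, the total horizontal length is at most $2(2k+1)^2\cdot\frac{2x}{2k+1} = 4x(2k+1) = O(x\sqrt{\lambda})$, using $2k+1 = O(\sqrt{\lambda})$ from part (1).

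For part (3) the key observation is that during a horizontal search the searcher's height is nondecreasing, so the total vertical movement equals the net height gain, that is, the maximum height reached minus the starting height $2x\sqrt{\lambda}$ of $G_x$. Because the tour stays within the cells of $\connectedcomponent$ and the searcher climbs only when the terrain forces it, the maximum height reached is at most the maximum terrain height over $\connectedcomponent$. I would bound this cellwise: by definition every eligible cell contains a point of height at most $2x\sqrt{\lambda}$, so by $\lambda$-Lipschitzness the terrain height anywhere in that cell exceeds $2x\sqrt{\lambda}$ by at most $\lambda$ times the cell diameter $\sqrt{2}\cdot\frac{2x}{2k+1}$. Substituting $2k+1 \geq 4\sqrt{2\lambda}$ gives $\lambda\sqrt{2}\cdot\frac{2x}{2k+1} \leq \frac{x\sqrt{\lambda}}{2}$, so the net height gain, and hence the total vertical movement, is at most $\frac{x\sqrt{\lambda}}{2}$.

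The arithmetic simplifications are routine; the step that needs genuine care is part (3), where I must justify that the total vertical movement equals the net height gain (which depends on the monotonicity that is deliberately built into the strategy) and that the worst-case terrain height over all of $\connectedcomponent$ is controlled purely by the single-cell Lipschitz bound combined with the eligibility condition. The tour-length argument of part (2) is standard, but it does rely on choosing the Eulerian-style double traversal of a spanning tree rather than an arbitrary tour, so as to keep the edge count linear in the number of cells.
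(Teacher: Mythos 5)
Your proposal is correct and follows essentially the same route as the paper's proof: identical counting for (1), a spanning-tree-doubling argument for (2) (the paper invokes the MST together with the fact that an optimal tour is at most twice its length, which is exactly the depth-first double traversal you make explicit), and the same eligibility-plus-Lipschitz cell-diameter bound combined with height monotonicity for (3). The only differences are presentational, e.g., you carry the exact side length $\frac{2x}{2k+1}$ and correctly note the edge count is $|\connectedcomponent|-1$ where the paper loosely says $|\connectedcomponent|$.
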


\begin{proof}
For (1) recall that the number of cells in $G_x$ is independent of $x$ and is $(2k+1)^2$ by construction, where $k$ is the smallest integer for which $2 k + 1 \geq 4 \sqrt{2 \lambda}$. This directly implies that $k = O(\sqrt{\lambda})$ and hence the number of grid cells in $G_x$ is $O(\lambda)$. 

For (2), we must bound the length of the tour that visits the centers of the reachable eligible cells in $\connectedcomponent$. For that we consider the minimum spanning tree (MST) on the centers of the grid cells in $\connectedcomponent$. The edges in this tree can only consist of edges between two neighboring grid cells that share a side. Thus, every edge in the MST has a length that corresponds to the side length of a single grid cell, which is at most $\frac{x}{2 \sqrt{2 \lambda}}$ by construction. Since the number of edges in the MST is $|\connectedcomponent|$, and $\connectedcomponent$ contains at most all cells in $G_x$, the total length of the MST is at most $O(\lambda) \times \frac{x}{2 \sqrt{2 \lambda}} = O(x \sqrt{\lambda})$ due to (1). Since the length of the optimal tour through all centers of cells in $\connectedcomponent$ is at most twice the length of the MST, the stated bound follows. 

For (3), consider any eligible cell $\sigma$ in $G_x$. By definition, there must be a point inside $\sigma$ with height at most the height of the grid. Since $\sigma$ has a side length of at most $\frac{x}{2 \sqrt{2 \lambda}}$, the maximum horizontal distance between two points in $\sigma$ is at most $\frac{x}{2 \sqrt{\lambda}}$. Given that the maximum slope of the terrain is $\lambda$, the maximum height in $\sigma$ is at most $\lambda \frac{x}{2 \sqrt{\lambda}} = \frac{x \sqrt{\lambda}}{2}$. Since the tour is contained to eligible cells by construction and the searcher can never decrease height, this is also the maximum amount of vertical movement. 
\end{proof}

Note that property (3) of Lemma~\ref{lem:grid} implies that the search path $P_\lambda$ is indeed valid, as the distance between grids $G_x$ and $G_{2 x}$ is $2 x \sqrt{\lambda}$, which is greater than $\frac{x \sqrt{\lambda}}{2}$. Thus, it is never necessary to move down again to reach the next grid in $P_\lambda$. We can now bound the length of $P_\lambda$ at a particular height along the path. 

\begin{lemma}\label{lem:length-to-grid}
The length of $P_\lambda$ up to the point of reaching a horizontal grid $G_x$ is at most $O(x \sqrt{\lambda})$.
\end{lemma}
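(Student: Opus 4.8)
The plan is to bound the length of $P_\lambda$ by splitting every motion into its horizontal and vertical components and bounding each total separately. The key observation that makes this legitimate is that for any path segment the Euclidean length is at most the sum of its horizontal and vertical extents, since $\sqrt{a^2+b^2}\le a+b$ for $a,b\ge 0$. Hence the length of $P_\lambda$ up to $G_x$ is at most the total horizontal movement plus the total vertical movement performed before reaching $G_x$, and it suffices to show each of these is $O(x\sqrt{\lambda})$.

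For the horizontal component, I would first identify that the grids visited before reaching $G_x$ are exactly $G_\varepsilon, G_{2\varepsilon}, G_{4\varepsilon}, \dots, G_{x/2}$, because the total side length of the grid doubles at each iteration. By Lemma~\ref{lem:grid}(2), the horizontal movement during the search in a grid $G_y$ is $O(y\sqrt{\lambda})$. Summing over all these grids yields a geometric series whose terms double, dominated by its largest term of order $x$, so the total horizontal movement is $O(\sqrt{\lambda})\cdot(\varepsilon+2\varepsilon+\cdots+x/2)=O(x\sqrt{\lambda})$.

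For the vertical component, the cleanest argument uses the fact, noted in the construction, that the searcher never decreases its height anywhere on $P_\lambda$. Therefore the total vertical movement up to reaching $G_x$ equals the final height attained, namely the height of $G_x$, which is $(2x-\varepsilon)\sqrt{\lambda}\approx 2x\sqrt{\lambda}=O(x\sqrt{\lambda})$. Combining the two bounds gives total length at most $O(x\sqrt{\lambda})+O(x\sqrt{\lambda})=O(x\sqrt{\lambda})$, as claimed.

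The only subtle point, and the step I would treat most carefully, is justifying that "reaching $G_x$" really corresponds to the searcher being at the grid height $2x\sqrt{\lambda}$, so that the monotone-height argument pins the total vertical movement to exactly this value. This relies on the validity remark following Lemma~\ref{lem:grid}: by property~(3) the in-search vertical movement within $G_{x/2}$ is at most $\tfrac{(x/2)\sqrt{\lambda}}{2}$, which keeps the searcher below the height of the next grid $G_x$, so the ascent to $G_x$ is a genuine upward move and no downward motion is ever needed. I would invoke this non-decreasing-height property explicitly; alternatively one can sum the explicit vertical jumps between consecutive grids (which again double and form a geometric series) together with the per-grid in-search vertical movements bounded by Lemma~\ref{lem:grid}(3), but the monotonicity argument avoids that bookkeeping.
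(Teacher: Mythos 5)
Your proposal is correct and follows essentially the same argument as the paper: decompose the path length into horizontal movement (a geometric sum over the grids $G_{x/2}, G_{x/4}, \dots$ bounded via Lemma~\ref{lem:grid}) plus vertical movement (equal to the height $2x\sqrt{\lambda}$ of $G_x$ by the non-decreasing-height property), and add the two bounds. Your extra care in justifying the decomposition and the monotone-height step only makes explicit what the paper leaves implicit.
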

\begin{proof}
The total amount of vertical movement in $P_\lambda$ simply corresponds to the height of $G_x$, which is $2 x \sqrt{\lambda}$ by construction. For the horizontal movement we have to consider the grids $G_{x/2}, G_{x/4}, G_{x/8}, \ldots$, which by Lemma~\ref{lem:grid} induce a horizontal movement of at most $\sum_{i=1}^\infty O(x/2^i \sqrt{\lambda}) = O(x \sqrt{\lambda})$. The stated bound follows from adding the horizontal and vertical movement in $P_\lambda$.
\end{proof}
Next, we use $\lambda$ to determine when a point on $P_\lambda$ can see the target $t$. 

\begin{lemma}
    \label{lemma:cone-to-goal}
    If $p$ is a point that can see $\goal$, then any point $p^*$ in the upwards cone starting at $p$ with slope $\lambda$ can see $\goal$.
\end{lemma}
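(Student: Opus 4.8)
The plan is to show directly that the segment from $p^*$ to $\goal$ does not properly intersect the terrain, by comparing it \emph{pointwise} against the segment from $p$ to $\goal$, which we already know lies on or above the terrain because $p$ can see $\goal$. Throughout I write $\bar q$ for the horizontal projection of a point $q$ and $q_z$ for its height, so that the membership of $p^*$ in the upward cone at $p$ reads $p^*_z - p_z \geq \lambda\,\|\bar{p^*} - \bar p\|$, and the Lipschitz property of the terrain function $T$ is $T(\bar a) \leq T(\bar b) + \lambda\,\|\bar a - \bar b\|$.

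The central idea is to parametrize both segments \emph{synchronously} from their common endpoint $\goal$. For $\mu \in [0,1]$ I would set $w(\mu) = (1-\mu)\,\goal + \mu\,p$ on the segment $p\goal$ and $m(\mu) = (1-\mu)\,\goal + \mu\,p^*$ on the segment $p^*\goal$. Then $m(\mu) - w(\mu) = \mu\,(p^* - p)$ exactly, so $m(\mu)$ is obtained from $w(\mu)$ by a horizontal shift of length $\mu\,\|\bar{p^*}-\bar p\|$ together with a vertical rise of $\mu\,(p^*_z - p_z)$. This is the step that makes the whole argument go through, since it ties the horizontal drift between the two segments to a single parameter.

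From here the bound is a short computation. Using the Lipschitz property to estimate the terrain beneath $m(\mu)$ gives $T(\bar m(\mu)) \leq T(\bar w(\mu)) + \lambda\,\|\bar m(\mu) - \bar w(\mu)\| = T(\bar w(\mu)) + \lambda\mu\,\|\bar{p^*}-\bar p\|$. Because $w(\mu)$ lies on or above the terrain we have $T(\bar w(\mu)) \leq w_z(\mu)$, and the cone condition $p^*_z - p_z \geq \lambda\,\|\bar{p^*}-\bar p\|$ guarantees that the vertical gain $m_z(\mu) - w_z(\mu) = \mu\,(p^*_z - p_z)$ is at least the horizontal penalty $\lambda\mu\,\|\bar{p^*}-\bar p\|$. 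Chaining these inequalities yields $m_z(\mu) \geq T(\bar m(\mu))$ for every $\mu$, so the segment $p^*\goal$ never dips below the terrain and $p^*$ sees $\goal$. The only endpoint to check is $\mu = 0$, where $m = w = \goal$ lies on the terrain, which is consistent with ``does not properly intersect.''

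The conceptual obstacle is choosing the right object to compare against. A tempting route is to observe that the Lipschitz bound places the entire terrain below the upward slope-$\lambda$ cone emanating from $\goal$, so visibility would be immediate for any point inside \emph{that} cone; but this does not close the argument, since $p$ need not lie in $\goal$'s cone at all ($p$ may see $\goal$ at a shallow grazing angle). The resolution is to compare with the segment $p\goal$ rather than with $\goal$ directly, and the synchronous parametrization is precisely what forces the horizontal drift to grow at the linear rate $\mu\,\|\bar{p^*}-\bar p\|$, matching the guaranteed vertical gain so that the cone's slope-$\lambda$ condition exactly absorbs the worst-case Lipschitz increase of the terrain. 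Everything after identifying this comparison is routine.
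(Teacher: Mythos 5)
Your proof is correct and is essentially the paper's own argument made explicit: the paper notes that the slope-$\lambda$ wedge over the segment $p\goal$ is unobstructed (because the terrain's slope is bounded by $\lambda$) and that the segment $p^*\goal$ lies inside this wedge, and your synchronous parametrization $m(\mu)-w(\mu)=\mu(p^*-p)$ is exactly a pointwise verification of those two facts, with the Lipschitz inequality playing the role of ``the cone above an unobstructed point is unobstructed.'' The difference is purely presentational --- you carry out analytically what the paper states geometrically --- which does have the merit of making rigorous the convexity-of-the-wedge step the paper leaves implicit.
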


\begin{proof} 
    Since the slope is bounded by $\lambda$, the upwards cone with slope $\lambda$ above any point that lies above the terrain must be unobstructed. Furthermore, the line segment between $p$ and $\goal$ is unobstructed. Hence, the upwards wedge with slope $\lambda$ over the path between $p$ and $\goal$ is also unobstructed (see Figure~\ref{fig:cone-to-goal}). 
    Since the cone above $p$ is unobstructed and the wedge is unobstructed, the line segment between $p^*$ and $\goal$ is unobstructed. \qed
\end{proof}

\begin{figure}
    \centering
    \includegraphics{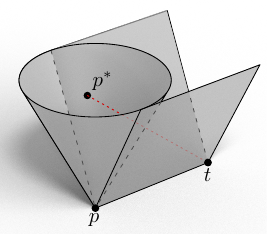}
    \caption{Illustration of Lemma~\ref{lemma:cone-to-goal}. Since the cone upwards from $p$ is unobstructed, any point $p^*$ in the cone can see $\goal$.}
    \label{fig:cone-to-goal}
\end{figure}

\begin{theorem}
Our strategy for searching in a 2.5D terrain with maximum slope~$\lambda$ achieves a competitive ratio of at most $O(\sqrt{\lambda})$.
\end{theorem}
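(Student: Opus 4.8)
The plan is to show that the searcher detects $\goal$ after traveling a distance proportional to $\sqrt{\lambda}$ times the optimum. Let $d$ denote the length of the shortest path from $p_0$ to a point that can see $\goal$, and let $p^*$ be such an optimal point, so $p^*$ lies at distance $d$ from $p_0$ and sees $\goal$; in particular both the horizontal distance from $p^*$ to $p_0$ and the height $z^*$ of $p^*$ are at most $d$. By Lemma~\ref{lemma:cone-to-goal} it suffices to drive the searcher into the upward cone $K$ of slope $\lambda$ with apex $p^*$, since every point of $K$ sees $\goal$. At height $2x\sqrt{\lambda}$ this cone is a horizontal disk of radius $(2x\sqrt{\lambda}-z^*)/\lambda$ centered at the projection $\bar{p}^*$ of $p^*$, so I only need to argue that the horizontal search on a suitably chosen grid passes through this disk.

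First I would fix the grid. Since the grids in $P_\lambda$ have half-lengths $\varepsilon,2\varepsilon,4\varepsilon,\dots$, there is a grid $G_x$ with $d\le x<2d$, and I claim the horizontal search on this $G_x$ already detects $\goal$. Three things must be checked. (i) The cell $\sigma^*$ containing $\bar{p}^*$ is eligible: the terrain height at $\bar{p}^*$ is at most $z^*\le d$, which is at most the grid height $2x\sqrt{\lambda}$ once $\lambda$ exceeds a fixed constant. (ii) The cell $\sigma^*$ lies in the connected component $\connectedcomponent$ reachable from $\sigma_0$: the optimal path from $p_0$ to $p^*$ stays within distance $d$ of $p_0$, hence at height at most $d$ everywhere, and since it runs above the terrain the terrain directly below its projection has height at most $d\le 2x\sqrt{\lambda}$; thus every cell that this projected curve meets is eligible. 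Because $x\ge d$ the whole projection fits inside $G_x$, and because it is a connected curve lying in the open sublevel set $\{T_2<2x\sqrt{\lambda}\}$, the eligible cells it meets form a side-adjacent chain from $\sigma_0$ to $\sigma^*$, so $\sigma^*\in\connectedcomponent$. (iii) Visiting the center of $\sigma^*$ places the searcher in $K$: the center lies within the cell diagonal $\tfrac{x}{2\sqrt{\lambda}}$ of $\bar{p}^*$, while the searcher is there at height at least $2x\sqrt{\lambda}$, so the cone inequality $2x\sqrt{\lambda}-z^*\ge\lambda\cdot\tfrac{x}{2\sqrt{\lambda}}$ reduces to $\tfrac{3}{2}x\sqrt{\lambda}\ge z^*$, which holds since $z^*\le d\le x$ and $\lambda$ is large. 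As the tour visits the center of every cell of $\connectedcomponent$, it visits the center of $\sigma^*$ and therefore enters $K$, detecting $\goal$.

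It then remains to bound the distance. By Lemma~\ref{lem:length-to-grid} the length of $P_\lambda$ until reaching $G_x$ is $O(x\sqrt{\lambda})$, and by Lemma~\ref{lem:grid}(2) the horizontal search on $G_x$ contributes at most a further $O(x\sqrt{\lambda})$ before the center of $\sigma^*$ is visited. Since $x<2d$, the total distance traveled before detecting $\goal$ is $O(d\sqrt{\lambda})$, and dividing by $|\opt|=d$ yields a competitive ratio of $O(\sqrt{\lambda})$. I expect the connectivity step (ii) to be the main obstacle: care is needed to guarantee that $\sigma^*$ is reached through side-adjacent eligible cells rather than only through shared corners, which is exactly where the openness of the sublevel set and the Lipschitz bound on $T_2$ are used; the eligibility and cone-containment steps, together with the regime of small $\lambda$, are then routine constant-factor checks.
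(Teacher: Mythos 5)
Your proof is correct in substance, but it takes a genuinely different route from the paper's. The paper splits the analysis into two cases according to whether the optimal point $\pointopt$ lies inside or below an auxiliary cone of slope $\sqrt{\lambda}/2$ above $p_0$: in the steep case the target is seen during the purely vertical ascent (via $z^*/z \leq 1+2\sqrt{\lambda}$), and in the shallow case the paper defines $G_x$ as the \emph{first} grid whose search visits a cell above $\pointopt$ and then lower-bounds the optimum by a dichotomy on the previous grid $G_{x/2}$: either no cell of $G_{x/2}$ lies above $\pointopt$ (so $r > x/2$), or such a cell exists but is not in $\connectedcomponent$, in which case any path from $p_0$ to $\pointopt$ must climb to height $x\sqrt{\lambda}$ or exit the domain of $G_{x/2}$, giving $\mathit{OPT} \geq x/2$. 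This contrapositive trick lets the paper avoid ever proving that the cell above $\pointopt$ becomes reachable. You instead run a single unified argument: pick the grid $G_x$ with $d \leq x < 2d$ and prove \emph{directly} that its search detects $\goal$, by showing the cell above the optimal point is eligible and side-connected to $\sigma_0$ through the cells met by the projection of the optimal geodesic (all of which are eligible because the terrain under a path of length $d$ stays below height $d < 2x\sqrt{\lambda}$). The trade-offs are real: your route must handle the corner-crossing subtlety for side-adjacency (which you correctly flag, and which is fixable since the terrain is strictly below the grid height along the curve, so all four cells at a crossed corner are eligible) and needs $\lambda$ bounded below by a constant (e.g.\ $\lambda \geq 4/9$) for the inequalities, so strictly the bound should be read as $O(1+\sqrt{\lambda})$; the paper's case split sidesteps connectivity entirely. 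On the other hand, your connectivity argument establishes something the paper implicitly takes for granted, namely that a cell directly above $\pointopt$ is ever visited at all --- the paper's ``consider the first time\ldots'' presupposes existence --- so your argument is in that respect more self-contained, and it also yields the clean quantitative relation $x < 2d$ between the detection grid and the optimum rather than inferring it indirectly.
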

\begin{proof}
Let $\pointopt$ be the point with the shortest distance to $p_0$ that can see $\goal$, and let $d(p_0,\pointopt)$ be the distance from $p_0$ to $\pointopt$. Furthermore, let $\vee$ be the cone cast upward from $p_0$ with slope $\frac{\sqrt{\lambda}}{2}$. For our analysis we consider two different cases: (1) $\pointopt$ lies inside of $\vee$, or (2) $\pointopt$ lies outside of $\vee$.

\mypar{Case 1: $\pointopt$ lies within $\vee$.}
Let $z$ be the height of $\pointopt$ and let $r$ be the horizontal distance from $p_0$ to $\pointopt$. Since $\pointopt$ lies within $\vee$, we know that $z \geq \frac{\sqrt{\lambda}}{2} r$. If we cast a ray directly upwards from $p_0$, we hit the cone from $\pointopt$ with slope $\lambda$ at height $z^* = z + \lambda r$. By Lemma~\ref{lemma:cone-to-goal}, we see $\goal$ from that intersection point (or any point directly above it). The next horizontal grid $G_x$ of $P_\lambda$ is at height $\leq 2 z^*$, so $2 x \sqrt{\lambda} \leq 2 z^*$ or $x \leq \frac{z^*}{\sqrt{\lambda}}$. By Lemma~\ref{lem:length-to-grid} this implies that the searcher travels at most a distance of $O(z^*)$ before seeing $t$. Since the minimum distance to reach $\pointopt$ is at least $z \geq \frac{\sqrt{\lambda}}{2} r$, we get that $z^* / z \leq 1 + 2 \sqrt{\lambda}$. Hence, the competitive ratio in this case is at most $O(z^* / z) = O(\sqrt{\lambda})$.

\mypar{Case 2: $\pointopt$ lies below $\vee$.}
Let again $z$ be the height of $\pointopt$ and let $r$ be the horizontal distance from $p_0$ to $\pointopt$. Since $\pointopt$ lies below $\vee$, we know that $z < \frac{\sqrt{\lambda}}{2} r$. Consider the first time that a cell $\sigma$ directly above $\pointopt$ is visited by $P_\lambda$ during a horizontal search of a grid $G_x$. Since $x \geq r$, the vertical distance between $\pointopt$ and $\sigma$ is at least $2 x \sqrt{\lambda} - z \geq \sqrt{\lambda} (2 x - \frac{r}{2}) \geq \frac{3 x}{2} \sqrt{\lambda}$. Hence, the upwards cone from $\pointopt$ with slope $\lambda$ intersects the horizontal plane at $G_x$ in a circle with radius $\frac{3 x}{2} \sqrt{\lambda} / \lambda = \frac{3 x}{2 \sqrt{\lambda}}$. Since the side length of $\sigma$ is at most $\frac{x}{2 \sqrt{2 \lambda}}$, this circle also contains the center of $\sigma$, from which we see $t$ due to Lemma~\ref{lemma:cone-to-goal}. Thus, the target is found at the latest during the horizontal search on $G_x$.  Lemmata~\ref{lem:length-to-grid} and~\ref{lem:grid} (property 2 and 3) then imply that we travel at most a distance of $O(x \sqrt{\lambda})$ before we find~$t$. 

We now consider the distance $d(p_0,\pointopt)$. By construction, the horizontal search on grid $G_{x/2}$ did not visit a cell above $\pointopt$. We consider two possible cases. If the grid $G_{x/2}$ does not contain any cell directly above $\pointopt$, then $r > x/2$. In that case $d(p_0,\pointopt) > x/2$ and hence we obtain a competitive ratio of $O(x \sqrt{\lambda}) / (x/2) = O(\sqrt{\lambda})$. If $G_{x/2}$ does contain a cell $\sigma'$ directly above $\pointopt$, then $\sigma'$ was not part of $\connectedcomponent$ for $G_{x/2}$. But then, in order to reach the point $\pointopt$ from $p_0$, we must either reach a height of $x \sqrt{\lambda}$ (the height of $G_{x/2}$), or we must leave the horizontal domain of $G_{x/2}$. In both cases the shortest distance from $p_0$ to $\pointopt$ is at least $x/2$ (or even $x \sqrt{\lambda}$ in the first case). Thus, we again obtain a competitive ratio of $O(\sqrt{\lambda})$.
\end{proof}

\section{Conclusion}\label{sec:conclusion}
The lower and upper bound for $1.5$D terrain might be improved with a more intricate example and more extensive analysis respectively. 
For our search strategies we assumed that the terrain is given beforehand. However, our searching strategy for $1.5$D terrains is affected by the terrain only when obstructed, thus the searching strategy can handle unknown terrains. This does not hold for our strategy on $2.5$D terrains. Though we can address terrain on the fly, we crucially use the maximum slope $\lambda$ to construct our search path. It would be interesting to study whether an efficient strategy exists that does not require~$\lambda$ to be known. Another direction for future research is to extend the result on $2.5$D terrains to special types of polyhedral domains, such as star-shaped polyhedra. An important question here is how to redefine the parameter $\lambda$ for polyhedral domains such that the competitive ratio can be bounded in terms of that parameter. 

\bibliography{bibliography.bib}

\end{document}